\crefname{algorithm}{Algorithm}{Algorithms} %
\definecolor{White}{rgb}{1,1,1}
\definecolor{Black}{rgb}{0,0,0}
\definecolor{Gray}{rgb}{.3,.3,.3}
\definecolor{LightGray}{rgb}{.8,.8,.8}
\definecolor{LightYellow}{rgb}{0.95,0.90,0.75}
\definecolor{Yellow}{rgb}{0.95,0.8,0.45}
\definecolor{LightBlue}{rgb}{0.85,0.9,.95}
\definecolor{Blue}{rgb}{0.25,0.45,.75}
\definecolor{LightRed}{rgb}{0.95,0.9,0.85}
\definecolor{Red}{rgb}{0.85,0.25,0.25}
\definecolor{LightGreen}{rgb}{0.75,0.95,0.85}
\definecolor{Green}{rgb}{0.25,0.75,.45}
\colorlet{ChannelColor}{LightYellow}
\colorlet{ChannelTextColor}{Black}
\colorlet{ReadoutColor}{White}
\colorlet{LineColor}{Black}
\colorlet{TextColor}{Black}
\tikzset{%
  draw=LineColor, text=LineColor, %
  control/.style={circle, fill=LineColor, minimum size=3, inner sep=0}, %
  gate/.style={draw, minimum size=14, fill=ChannelColor, inner sep=2,
    text=ChannelTextColor}, %
  target/.style={circle, draw, minimum size=6, inner sep=0}, %
  cross/.style={cross out, draw, minimum size=2.5, inner sep=0}, %
}
\newcommand{\microspace}{\mspace{.5mu}} %
\renewcommand{\ket}[1]{\ensuremath{\lvert\microspace#1%
    \microspace\rangle}} %
\renewcommand{\bra}[1]{\ensuremath{\langle\microspace#1%
    \microspace\rvert}} %
\newcommand{\ignore}[1]{} %
\newcommand{\complex}{\mathbb{C}} %
\newcommand{\class}[1]{\textup{#1}\xspace} %
\newcommand{\NP}{\class{NP}} %
\newcommand{\BQP}{\class{BQP}} %
\newcommand{\MIP}{\class{MIP}} %
\newcommand\MIP*{\ensuremath{\class{MIP}^*}} %
\DeclareMathOperator{\poly}{poly} %
\def\e{e} % exp
\newcommand\restr[2]{{% we make the whole thing an ordinary symbol
  \left.\kern-\nulldelimiterspace% automatically resize the bar with \right
  #1 % the function
  \vphantom{\big|} % pretend it's a little taller at normal size
  \right|_{#2} % this is the delimiter
  }}
\newcommand{\size}[1]{\ensuremath{\mathrm{B}(#1)}} %
\newcommand{\bin}{\ensuremath{\{0, 1\}}} %
\newcommand{\lin}{\mathrm{in}} %
\newcommand{\lout}{\mathrm{out}} %
\newcommand{\gatestyle}[1]{\mathrm{#1}} %
\newcommand{\Had}{\gatestyle{H}} %
\newcommand{\T}{\gatestyle{T}} %
\newcommand{\X}{\gatestyle{X}} %
\newcommand{\Y}{\gatestyle{Y}} %
\newcommand{\Z}{\gatestyle{Z}} %
\newcommand{\W}{\gatestyle{W}} %
\newcommand{\Id}{\gatestyle{I}} %
\newcommand{\CZ}{\gatestyle{CZ}} %
\newcommand{\CCZ}{\gatestyle{CCZ}} %
\newcommand{\CNOT}{\gatestyle{CNOT}} %
\newcommand{\Toffoli}{\gatestyle{Toffoli}} %
\newcommand{\iSWAP}{\gatestyle{iSWAP}} %
\newcommand{\fSim}{\gatestyle{fSim}} %
\newcommand{\Rz}{\gatestyle{Rz}}
\DeclareMathOperator{\cc}{cc} % contraction complexity
\DeclareMathOperator{\tw}{tw} % treewidth
\begin{document}

%% End-Of-Header

\title{FeynmanDD\@: Quantum Circuit Analysis with Classical Decision Diagrams}

\newcommand*\samethanks[1][\value{footnote}]{\footnotemark[#1]}

\author{
  Ziyuan Wang\thanks{First authors.}\inst{1} \and %
  Bin Cheng\samethanks\inst{2} \and %
  Longxiang Yuan\inst{1} \and %
  Zhengfeng Ji\thanks{Correspondence author. Email: jizhengfeng@tsinghua.edu.cn.}\inst{1,3}}

\institute{Department of Computer Science and Technology, Tsinghua University
\and
Centre for Quantum Technologies, National University of Singapore
\and
Zhongguancun Laboratory}

\date{\today}

\maketitle

\begin{abstract}

  Applications of decision diagrams in quantum circuit analysis have been an
  active research area.
  Our work introduces FeynmanDD, a new method utilizing standard and
  multi-terminal decision diagrams for quantum circuit simulation and
  equivalence checking.
  Unlike previous approaches that exploit patterns in quantum states and
  operators, our method explores useful structures in the path integral
  formulation, essentially transforming the analysis into a counting problem.
  The method then employs efficient counting algorithms using decision diagrams
  as its underlying computational engine.
  Through comprehensive theoretical analysis and numerical experiments, we
  demonstrate FeynmanDD's capabilities and limitations in quantum circuit
  analysis, highlighting the value of this new BDD-based approach.

  \keywords{Decision Diagrams \and Quantum Computing \and Classical Simulation
    of Quantum Circuits \and Quantum Circuit Equivalence}

\end{abstract}

\section{Introduction}\label{sec:intro}

% Review of BDD, classical

Binary Decision Diagrams (BDDs) are widely regarded as one of the most
influential data structures for representing, analyzing, and simulating
classical circuits~\cite{Bry95,MMBS04}.
Since their introduction in the 1980s~\cite{Bry86}, BDDs have found applications
in various domains, including circuit synthesis, formal verification, and model
checking.
In one of his video lectures given in the 2000s on trees and BDDs, Donald Knuth
described them as ``one of the few truly fundamental data structures that came
out in the last twenty-five years''.
In his seminal book series, \emph{The Art of Computer Programming}, he dedicated
an entire section to BDDs~\cite{Knu09}.

% Review of BDD, quantum

Given their success in classical circuit design and analysis, it is natural to
explore how BDDs can be extended to the representation and simulation of quantum
circuits.
The core idea of decision diagrams is to exploit repeated patterns in truth
tables, enabling compact data representations.
Quantum states and operations, like truth tables, scale exponentially yet may
also exhibit repetitive patterns.
For example, directly representing an $n$-qubit state requires storing $2^{n}$
complex numbers, known as amplitudes.
Several existing works in quantum computing adopt this approach to identify
patterns in quantum states and
operations~\cite{VMH03,AP06,VMH07,NWM+16,LWK11,ZW19,ZHW19,BBW21,WHB22,HZK+22,VGH+23,JFB+24}.
Some studies, like those in~\cite{HZL+22,LOLS24}, combine tensor network methods
with decision diagrams to enhance computational efficiency.
Others, such as~\cite{SCR24}, incorporate additional compression techniques
based on context-free languages.
These methods, inspired by BDDs, extend beyond traditional BDDs and are better
described as \emph{BDD-motivated} techniques.

% Bit Slicing

In BDD-motivated approaches, entirely new implementations of the data structure
are often required.
Additionally, to capture more intricate patterns, non-trivial labels may need to
be assigned to the diagram's links~\cite{NWM+16,ZW19,ZHW19,HZL+22}.
In contrast, a less-explored approach uses classical BDD data structures and
decision diagram packages, such as CUDD~\cite{Som05}, to directly represent and
manipulate quantum states.
For example, Ref.~\cite{TJJ21} proposed a bit-slicing method to represent
quantum amplitudes of states generated by the Clifford and $\T$ gate set using
BDDs.
We refer to this as a \emph{BDD-based} method, as it utilizes classical BDDs as
the underlying engine.
Given the limited research on BDD-based methods for quantum circuits, a natural
question arises: \emph{Are there other BDD-based methods that can significantly
  enhance the analysis and simulation of quantum circuits?}

% Our contributions

\textbf{Key Contributions.}
We answer this question in the affirmative by proposing a new BDD-based method.

The first key contribution is the novel integration of the counting capabilities
of decision diagrams in quantum circuit analysis.
The profound connection between counting and quantum
computing~\cite{AA13,FGHP99,Wat09} is well-established, with counting complexity
classes providing natural upper bounds for $\BQP$~\cite{Wat09} and underpinning
quantum supremacy schemes~\cite{AA13}.
While the counting perspective has been explored in quantum circuit
analysis~\cite{MBL24}, this work uniquely combines quantum circuits' counting
nature with BDDs' efficient counting algorithms.
When decision diagrams have bounded size, counting solutions becomes
efficient---a key advantage of BDDs highlighted in Knuth's book~\cite{Knu09}.
This counting algorithm then serves as the computational engine in our method.
Unlike traditional Schrödinger-style simulators that use specialized BDD
variants for state vector evolution~\cite{NWM+16,ZW19,ZHW19,TJJ21,HZL+22}, our
method explores structures in the Feynman-type exponential sums.
To emphasize its foundations in Feynman path integral, we name our method
\emph{FeynmanDD} (Feynman Decision Diagrams).

The second contribution encompasses efficiency-enhancing techniques.
We present a binary synthesis method for constructing FeynmanDD's underlying
decision diagram, carefully transforming low-degree multilinear polynomials to
BDDs through strategic term ordering.
This approach reduces both intermediate representation sizes and computational
complexity.
We also emphasize the importance of variable ordering for BDDs, providing
various ordering heuristics for different circuit families.
The combined consideration of term orders and variable orders proves crucial for
the overall efficiency of our method.
Furthermore, we introduce a sum-of-powers framework that is both
flexible---supporting multiple gate sets within a unified approach---and
efficient, as it delays conversion to BDDs, further optimizing performance.

Finally, extensive numerical experiments demonstrate the superior performance of
FeynmanDD compared to existing decision diagram tools like DDSIM and SliQSim in
amplitude computation while showing advantageous performance in sampling tasks
for many families of circuit types even though our sampling efficiency does not
match that for amplitude estimation.
FeynmanDD also proves effective for circuit equivalence checking for certain
families of circuits, with performance comparable to its simulation
capabilities.
These results confirm that FeynmanDD represents a promising new direction in
quantum circuit analysis that is worthy of further investigation.
In a follow-up work, we will provide a characterization of the complexity of the
FeynmanDD method, showing provable efficiency advantages over tensor network
methods for certain families of circuits.

% Paper Outline

\textbf{Outline.}
We develop a sum-of-powers (SOP) framework to handle diverse quantum gate sets
in \cref{sec:sop-circ}, consistently mapping circuits of different discrete
quantum gate sets into SOP forms.
As special cases of tensor networks, SOPs can be manipulated and simplified
using tensor techniques, which we explain in \cref{sec:sop-tensor}.
Converting a quantum circuit to its SOP representation is straightforward.
The challenge lies in representing the SOP function $f$ as a BDD (or
multi-terminal BDD~\cite{BFG+93}), which is carefully discussed in
\cref{sec:feynmandd}.
After constructing the BDD for $f_{C}$, we demonstrate in \cref{sec:app} how
FeynmanDD enables quantum circuit analysis and simulation, including amplitude
computation, measurement outcome sampling, and circuit equivalence checking.
To initiate a rigorous understanding of FeynmanDD's capabilities, we construct
circuits using the linear network construction in~\cite{Knu09} (Figure 23),
which FeynmanDD can simulate efficiently, while tensor network and Clifford
methods face provably high complexity.
Extensive numerical experiments on quantum circuit simulation and equivalence
checking are conducted and discussed in \cref{sec:exp} and
\cref{sec:equivalence_check_results} respectively.

\section{Preliminaries}

\subsection{Basics of Quantum Circuits}

This work focuses on pure quantum states and does not consider noise and mixed
states.
An $n$-qubit pure quantum state is a normalized vector in the Hilbert space
$\complex^{2^{n}}$, denoted as $\ket{\psi} = \sum_{x \in \bin^n} c_x \ket{x}$,
where ${ \ket{x} }$ represents the computational basis.
Quantum circuits describe the evolution of quantum states through sequences of
unitary operators, typically acting on one, two, or three qubits.
Important quantum gates include the Hadamard gate $\Had = \frac{1}{\sqrt{2}}
\begin{pmatrix} 1 & \phantom{-}1 \\ 1 & -1 \end{pmatrix}$, the Pauli-$\Z$ gate
$\Z = \begin{pmatrix} 1 & \phantom{-}0 \\ 0 & -1 \end{pmatrix}$, the controlled-$\Z$ gate
($\CZ$), and the controlled-controlled-$\Z$ gate ($\CCZ$).
These gates form a weak universal set capable of approximating any (real)
unitary operator to arbitrary precision and sufficient to simulate all quantum
computation.
Other gates are also considered in this work and are discussed in
\cref{sec:sop-circ}.
Our analysis considers quantum circuits acting on initial state $\ket{0^n}$
without loss of generality.
With $U$ denoting the circuit's unitary operator, the final measurement in the
computational basis yields a classical string $x \in \bin^n$ with probability
$\abs{\bra{x} U \ket{0^{n}}}^2$.

\subsection{Quantum Circuit Analysis}

There are several types of quantum circuit analysis tasks widely considered in
the literature, including circuit simulation, equivalence checking, circuit
synthesis and optimization.
Our method is currently applicable to the simulation and equivalence-checking
tasks.

As our equivalence checking method is reduced to a variant of circuit simulation
in the end, we will focus on the discussion of quantum circuit simulation here.
There are two notions for classical simulation of quantum circuits: the strong
simulation and the weak simulation.
The strong simulation requires computing the output amplitude
$\bra{x}U\ket{0^n}$ (or the output probability) given $x$ and $U$.
The weak simulation requires sampling from the output distribution of the
quantum circuit, i.e., returning a string $x$ with probability
$\abs{\bra{x}U\ket{0^n}}^2$.

A straightforward way to simulate quantum circuits is to use the Schr\"{o}dinger
method, where the quantum state is stored as a $2^n$-dimensional vector~\cite{LWY+20}.
The quantum gates are represented by $2^n \times 2^n$ matrices, and the quantum
state is updated by matrix-vector multiplication.
Since the quantum state is stored, both the strong and weak simulations can be
performed.
The Schr\"{o}dinger method requires exponential space and time complexity, which
quickly becomes infeasible for large quantum circuits.

Another classical simulation technique is based on the Feynman path integral~\cite{FH65}.
Suppose $U$ consists of $m$ gates, $U = U_m\cdots U_2 U_1$.
Then, the idea of Feynman path integral is to insert identity operators in
between, transforming the output amplitude into an exponential sum of products:
\begin{align}\label{eq:feynman-path-integral}
  \bra{x}U\ket{0^n} = \sum_{y_1, \ldots, y_{m-1} \in \bin^n} \bra{x}U_m\ket{y_{m-1}}
  \cdots \bra{y_2}U_2\ket{y_1} \bra{y_1}U_1\ket{0^n}.
\end{align}
Note that each matrix element in the summation can be directly obtained from the
specification of the corresponding gate.
For some universal gate sets, these matrix elements can be represented by a
particularly simple form, giving a sum-of-powers representation as
detailed in \cref{sec:sop-circ}.
This simulation method has polynomial space complexity but the time complexity
is exponential in the \emph{number of gates}, which is even worse than the
Schr\"{o}dinger method.

Instead of directly computing the summation in \cref{eq:feynman-path-integral},
one can also compute it by tensor network contraction, one of the
state-of-the-art classical simulation techniques~\cite{MS08,PZ22}.
A rank-$k$ tensor with bond dimension two can be represented by a
$2^k$-dimensional array, $f_{i_1, \ldots, i_k}$, where
$i_1, \ldots, i_k \in \bin$.
In this way, a single- or two-qubit gate can be represented by a rank-$2$ or
rank-$4$ tensor, respectively.
A tensor network is a collection of tensors, where each index may appear in one
or two tensors.
The contraction of a tensor network is to sum over all indices that shared by
two tensors.
For tensor network representation of the amplitude $\bra{x}U\ket{0^n}$, no open
wires remain and all indices appearing in the tensor network are summed over,
giving a scalar of interest.
In this work, we explore BDD-based techniques for Feynman-type simulation.
We briefly mention that other families of techniques beyond decision diagrams
have also been studied in the literature including, for example, phase
polynomials~\cite{Amy19b,NRS+18}, tensor-based techniques~\cite{MS08,PZ22}, ZX
calculus~\cite{CD08,Wet20,CHKW22}, and tree automata~\cite{CCL+23,ACC+25}.

\subsection{Binary Decision Diagrams}

A binary decision diagram is a rooted, directed acyclic graph that succinctly
represents Boolean functions.
It contains two node types: decision nodes and terminal nodes.
Each decision node carries a variable label, such as $x_i$, with two outgoing
links to child nodes.
The dashed link represents the branch for $x_i = 0$, while the solid link
represents $x_i = 1$.
The terminal nodes, valued as $0$ and $1$, represent the function's output
values.
Any path from the root to a terminal node corresponds to a specific variable
assignment, with the terminal node indicating the Boolean function's value for
that assignment.
An example of a BDD is shown in \cref{fig:bdd}.

Two additional properties of BDDs are important: being ordered and reduced.
An \emph{ordered} BDD ensures that variables appear in a consistent order along
all paths from the root to terminal nodes.
A \emph{reduced} BDD contains no isomorphic subgraphs and no node has identical
children.
In the literature, BDDs typically refer to reduced ordered BDDs (ROBDDs), which
provide a canonical representation of Boolean functions.
This work employs multi-terminal BDDs (MTBDDs), also known as Algebraic Decision
Diagrams (ADDs)~\cite{BFG+93}, a generalized version where terminal nodes can
take multiple values rather than just binary values.

For a Boolean function $f$, we denote $\size{f}$ as the number of nodes in its
BDD representation.
Once a compact BDD representation is found for $f$, many problems related to the
function become tractable, with time complexity polynomial in $\size{f}$, even
if they were computationally hard originally.
Notably, counting solutions for $f(x) = 1$ using BDD requires time linear in
$\size{f}$~\cite{Knu09,Weg00}.
Therefore, when $\size{f}$ remains polynomially bounded, such counting
operations can be performed efficiently.
It is well-known that the variable ordering significantly impacts the BDD size,
and that finding the optimal ordering is known to be $\NP$-hard~\cite{Weg00}.
Most BDD packages provide dynamic variable ordering heuristics~\cite{Rud93}
which is useful for our purpose.

\section{Sum-of-Powers Representation for Quantum Circuits}\label{sec:sop-circ}

% H, CZ, CCZ

We begin our exposition by introducing the sum-of-powers (SOP) representation,
derived from the Feynman path integral formalism, which emerges as a flexible
framework for quantum circuit analysis; see,
e.g.,~\cite{Amy19a,Vil20,CBB+21,Vil22,Amy23,Vil24,DTPW24}.
The SOP approach utilizes the fact that there exist universal gate sets
consisting of unitary gates with particularly simple forms---so simple that all
non-zero entries of the unitary matrix have values proportional to the
\emph{power} $\omega^{f(x_{1}, x_{2}, \ldots, x_{k})}$, where $\omega$ is some
root of unity and $x_{1}, x_{2}, \ldots, x_{k}$ are variables labeling the input
and output wires discussed later.
This method has been explored in several quantum computing works, initially
by~\cite{DHM+05} and subsequently by~\cite{Mon17}.
Here, we review and slightly extend this approach within the SOP framework.

\subsection{Gate Set $\mathcal{Z}$}

First, consider quantum circuits over the gate set
$\mathcal{Z} = \{ \Had, \Z, \CZ, \CCZ \}$.
As in~\cite{Mon17}, we will label the wires of the circuit using the following
method.
Initially, we introduce a new variable for each qubit to label its corresponding
wire.
Subsequently, each Hadamard gate creates an additional variable to label its
output wire, which is the only way a new variable can be generated.
For a circuit with $n$ qubits and $h$ Hadamard gates, the total number of
variables will be $n + h$.
A labeling example is given in the left part in \cref{fig:labeling}, where the
circuit consists of four $\Had$ gates, one $\Z$ gate, one $\CZ$ gate, and two
$\CCZ$ gates.
In this circuit, seven variables $x_{1}, x_{2}, \ldots, x_{7}$ label the circuit
wires where $x_{1}, x_{2}, x_{3}$ represent the initial three qubits and
$x_{4}, \ldots, x_{7}$ are introduced by the four $\Had$ gates.

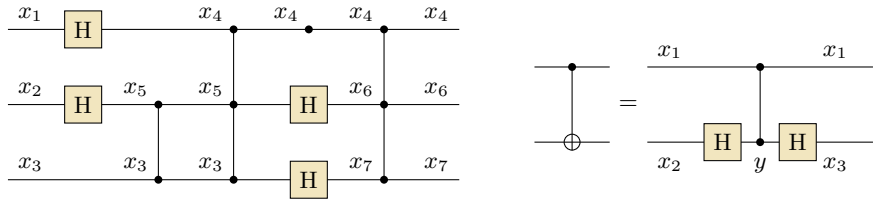
\begin{figure*}[htbp!]
  \centering
    \begin{tikzpicture}
      \node (In2) at (0, 2) [left] {};
      \node (In1) at (0, 1) [left] {};
      \node (In0) at (0, 0) [left] {};

      \node (Out2) at (6, 2) [right] {};
      \node (Out1) at (6, 1) [right] {};
      \node (Out0) at (6, 0) [right] {};

      \draw (In0) -- (Out0);
      \draw (In1) -- (Out1);
      \draw (In2) -- (Out2);

      \node (x1) at (.3, 2.2) {$x_{1}$};
      \node (x2) at (.3, 1.2) {$x_{2}$};
      \node (x3) at (.3, 0.2) {$x_{3}$};

      \node (x5) at (1.7, 1.2) {$x_{5}$};
      \node (x3) at (1.7, 0.2) {$x_{3}$};

      \node (x4) at (2.7, 2.2) {$x_{4}$};
      \node (x5) at (2.7, 1.2) {$x_{5}$};
      \node (x3) at (2.7, 0.2) {$x_{3}$};

      \node (x4) at (4.7, 2.2) {$x_{4}$};
      \node (x6) at (4.7, 1.2) {$x_{6}$};
      \node (x7) at (4.7, 0.2) {$x_{7}$};

      \node (x4) at (3.7, 2.2) {$x_{4}$};

      \node (x4) at (5.7, 2.2) {$x_{4}$};
      \node (x6) at (5.7, 1.2) {$x_{6}$};
      \node (x7) at (5.7, 0.2) {$x_{7}$};

      \node[gate] (H) at (1, 2) {$\Had$};
      \node[gate] (H) at (1, 1) {$\Had$};

      \node[control] (C0) at (2, 0) {};
      \node[control] (C1) at (2, 1) {};
      \draw (C0.center) -- (C1.center);

      \node[control] (C0) at (3, 0) {};
      \node[control] (C1) at (3, 1) {};
      \node[control] (C2) at (3, 2) {};
      \draw (C0.center) -- (C2.center);

      \node[gate] (H) at (4, 0) {$\Had$};
      \node[gate] (H) at (4, 1) {$\Had$};
      \node[control] (C0) at (4, 2) {};

      \node[control] (C0) at (5, 0) {};
      \node[control] (C1) at (5, 1) {};
      \node[control] (C2) at (5, 2) {};
      \draw (C0.center) -- (C2.center);

      \begin{scope}[xshift=7cm, yshift=.5cm]
        \node (In1) at (0, 1) [left] {};
        \node (In0) at (0, 0) [left] {};

        \node (Out1) at (1, 1) [right] {};
        \node (Out0) at (1, 0) [right] {};

        \draw (In0) -- (Out0);
        \draw (In1) -- (Out1);

        \node[control] (C) at (.5, 1) {};
        \node[target] (T) at (.5, 0) {};
        \draw (C.center) -- (T.south);

        \node (eq) at (1.25, .5) {$=$};
      \end{scope}

      \begin{scope}[xshift=8.5cm, yshift=.5cm]
        \node (In1) at (0, 1) [left] {};
        \node (In0) at (0, 0) [left] {};

        \node (Out1) at (3, 1) [right] {};
        \node (Out0) at (3, 0) [right] {};

        \draw (In0) -- (Out0);
        \draw (In1) -- (Out1);

        \node[gate] (H) at (1, 0) {$\Had$};
        \node[gate] (H) at (2, 0) {$\Had$};
        \node[control] (C0) at (1.5, 0) {};
        \node[control] (C1) at (1.5, 1) {};
        \draw (C0.center) -- (C1.center);

        \node (x1) at (.3, 1.2) {$x_{1}$};
        \node (x1) at (2.5, 1.2) {$x_{1}$};
        \node (x2) at (.3, -.3) {$x_{2}$};
        \node (x3) at (2.5, -.3) {$x_{3}$};

        \node (y) at (1.5, -0.3) {$y$};
      \end{scope}
    \end{tikzpicture}
    \caption{Left: Variable labeling for a quantum circuit of gates from gate
      set $\mathcal{Z}$.
      Right: A decomposition that derives a power-of-sum representation for
      $\CNOT$.}\label{fig:labeling}
\end{figure*}

The $\Had$ gate has matrix entries ${(-1)}^{xy} / \sqrt{2}$ where $x$ and $y$ are the
variables for the input and output wires respectively.
The $\CCZ$ gate has diagonal entries of ${(-1)}^{x_{1} x_{2} x_{3}}$ with $0$
entries elsewhere, using the same set of variables for the input and output to
indicate its diagonal nature.
In the example of \cref{fig:labeling}, the top qubit's $\Had$ gate is
represented as the power ${(-1)}^{x_{1} x_{4}} / \sqrt{2}$ and the final $\CCZ$
gate is represented by ${(-1)}^{x_{4} x_{6} x_{7}}$.
The gates $\CZ$ and $\Z$ can be discussed similarly with fewer variables.
Notice that, in this example, all gates in $\mathcal{Z}$ are represented by
powers of $-1$ up to a normalization factor (for $\Had$).

The power forms of the gates in the set $\mathcal{Z}$ can be identified as a
special tensor.
For $\Had$, it is a tensor of two legs labeled by variables $x, y$, and for
given values $x, y \in \bin$, the tensor takes value ${(-1)}^{xy}/\sqrt{2}$.
The gate $\CCZ$ is a tensor of six legs where three of them
$x_{1}, x_{2}, x_{3}$ are input and other three $y_{1}, y_{2}, y_{3}$ are
output.
This tensor takes value zero if input variables $(x_{1}, x_{2}, x_{3})$ and
output variables $(y_{1}, y_{2}, y_{3})$ differ, and
${(-1)}^{x_{1} x_{2} x_{3}}$ when they are identical.
This explains the reason we use the same set of variables for $\CCZ$ in the
labeling which enforces the input-output variable equality.
Consequently, each gate in $\mathcal{Z}$ is mapped to a tensor of the power
form, and each variable represents either the external (input and output)
variables or internal variables that are summed over by the tensor network
contraction operation.
For example, the circuit in the left of \cref{fig:labeling} corresponds to a
tensor
$\frac{1}{\sqrt{2^{4}}} \sum_{x_{5}} {(-1)}^{f_{C}(x_{1}, x_{2}, \ldots, x_{7})}$
where
\begin{equation*}
  \begin{split}
    & f_{C}(x_{1}, x_{2}, \ldots, x_{7}) = x_{1} x_{4} + x_{2} x_{5} + x_{3} x_{5} \\
    & \qquad\qquad + x_{3} x_{4} x_{5} + x_{4} + x_{5} x_{6} + x_{3} x_{7} + x_{4}
      x_{6} x_{7}.
  \end{split}
\end{equation*}
We call this summation the \emph{sum-of-powers} form for the circuit.
In this example, $x_{5}$ is the only variable summed over in the expression.

Generally, for a circuit using gates from set $\mathcal{Z}$, the sum-of-powers
form is $\frac{1}{\sqrt{2^{h}}} \sum_{y} {(-1)}^{f_{C}(x,y)}$, where $h$ is the
number of $\Had$ gates and $y$ stands for the internal variables.
We define $-1$ as the \emph{base} of the sum-of-powers forms above and
\emph{modulus} as the smallest positive integer exponent that brings the
base to identity.
The function $f_{C}$ is a multilinear polynomial of degree at most three.
For each $\Had$ gate, the product of the input and output variables gives the
term.
While for $\Z$, $\CZ$, or $\CCZ$ gates, the product of the input variables
gives the term.

% X, CNOT, Toffoli

\subsection{Complex Gates}

Not all elementary gates have the power form, but fortunately, there are
different ways to work with them in the sum-of-powers framework.
Consider the $\CNOT$ gate as an example, where
$\CNOT: \ket{x_{1}, x_{2}} \mapsto \ket{x_{1}, x_{1} \oplus x_{2}}$.
By definition, $\CNOT$ is a tensor that evaluates to $1$ if the input
variables $x_{1}, x_{2}$ and output variables $y_{1}, y_{2}$ satisfy
$y_{1} = x_{1}$ and $y_{2} = x_{1} \oplus x_{2}$ and to $0$ otherwise.
So one way to work with the $\CNOT$ gate is to label the wires with not just
variables, but linear functions of variables.
This approach was taken in~\cite{DHM+05}, one of the early papers that used the
wire labeling variables to analyze quantum circuits.

We take a different approach and use the fact that gates in $\mathcal{Z}$ are
universal so one can decompose $\CNOT$ as
$(\Id \otimes \Had) \, \CZ \, (\Id \otimes \Had)$.
See the right part of \cref{fig:labeling} for an illustration.
A sum-of-powers form for $\CNOT$ is, therefore,
$\frac{1}{2} \sum_{y} {(-1)}^{x_{1}y + x_{2}y + x_{3}y}$.
It is easy to verify directly that if $x_{3} = x_{1} \oplus x_{2}$, the
summation evaluates to $1$ and otherwise evaluates to $0$, which is consistent
with the definition of $\CNOT$.
So the $\CNOT$ gate in the sum-of-powers form is represented by input variables
$x_{1}$ and $x_{2}$, output variables $x_{1}$ and $x_{3}$, internal variable $y$
and normalization factor $2$.
In this approach, gates obtained by such decompositions as the $\CNOT$ gate are
called complex gates, whereas gates of the power form are called simple gates.
The complex gates can be treated easily as the ``syntactic sugar'' of a sequence
of simple gates in implementation.

We prefer this method because it is flexible and is much easier to implement.
For example, it works with $\Toffoli$ gates as well where the $\Toffoli$ gate
has input variables $x_{1}, x_{2}, x_{3}$, output variables
$x_{1}, x_{2}, x_{4}$, and one internal variable $y$.
The sum-of-powers form for $\Toffoli$ is
$\frac{1}{2} \sum_{y} {(-1)}^{x_{1}x_{2} y + x_{3} y + x_{4} y}$.
If we choose the first method using functions of variables as wire labels,
multiple $\Toffoli$ gates will induce high-degree polynomials in the general
case, which may be even more difficult to implement.

\subsection{Gate Set $\mathcal{T}$}

All the sum-of-powers forms we have considered so far have modulus $2$ and the
functions $f_{C}$ are effectively modulo $2$.
We emphasize that the modulus pertains to a sum-of-powers form of a gate, not
the gate itself, as a single gate may admit multiple sum-of-powers
representations with varying moduli.

% CNOT, H, T

We now expand our scope to the gate set $\mathcal{T} = \{\CNOT, \Had, \T\}$ and
introduce sum-of-powers forms of base other than $-1$.
Let $\omega_{8}$ be the $8$-th root of unity $\omega_{8} = \e^{i\pi/4}$.
The $\T$ gate is a diagonal matrix
$\begin{pmatrix} 1 & 0\\ 0 & \omega_{8} \end{pmatrix}$.
So in the sum-of-powers form, the input and output wire will have the same
variable $x$ and the $\T$ gate is represented as the power $\omega^{x}_{8}$.
The $\Had$ gate has input and output variables $x$ and $y$, respectively, as
before, and has a power form $\omega^{4xy}_{8} / \sqrt{2}$.
The $\CNOT$ gate has input $x_{1}, x_{2}$ and output variables $x_{1}, x_{3}$ as
before and take the form
$\frac{1}{2} \sum_{y} \omega^{4 (x_{1} + x_{2} + x_{3}) y}_{8}$.
In sum-of-powers representations, we work with the same base for all gates in
the gate set.
This is always possible to achieve by choosing the least common multiple of all
the moduli of the gates as the common modulus, as we have done for the gate set
$\mathcal{T}$.

% Google supremacy circuits

\subsection{Gate Set $\mathcal{G}$}
In addition to the gate sets $\mathcal{Z}$ and $\mathcal{T}$, the sum-of-powers
framework is flexible enough to natively represent the gate set used in the
Google supremacy experiment as well~\cite{AAB+19}.
There, the gate set employed is $\mathcal{G}$ which includes the single-qubit
gates
$ \sqrt{\X} = \frac{1}{\sqrt{2}}
  \begin{pmatrix}
    1 & -i\\
    -i & 1
  \end{pmatrix},\quad
  \sqrt{\Y} = \frac{1}{\sqrt{2}}
  \begin{pmatrix}
    1 & -1\\
    1 & 1
  \end{pmatrix},\quad
  \sqrt{\W} = \frac{1}{\sqrt{2}}
  \begin{pmatrix}
    1 & -\sqrt{i}\\
    \sqrt{-i} & 1
  \end{pmatrix}$,
and two-qubit gates
\begin{equation*}
  \fSim (\pi/2, \pi/6) =
  \begin{pmatrix}
    1 & & & \\
    & 0 & -i & \\
    & -i & 0 & \\
    & & & \e^{-i\pi/6}
  \end{pmatrix},\quad
  \iSWAP =
  \begin{pmatrix}
    1 & & & \\
    & 0 & -i & \\
    & -i & 0 & \\
    & & & 1
  \end{pmatrix}.
\end{equation*}
The $\iSWAP$ is a simplified version of $\fSim(\pi/2, \pi/6)$ often used in
benchmarking BDD simulation methods.
All these gates have the power form shown in \cref{tab:google}, and the common
modulus is $24$.

\begin{table}[htbp!]
  \fontsize{6pt}{8}\selectfont
  \centering
  \caption{Sum-of-powers representation for the Google supremacy gate
    set.}\label{tab:google}
  \begin{tabular}{lllll}
    \toprule
    Gate & Input & Output & Representation & Factor \\
    \midrule
    $\sqrt{\X}$ & $x_{0}$ & $x_{1}$
      & $\omega^{18 x_{0} + 18 x_{1} + 12 x_{0} x_{1}}_{24}$ & $1 / \sqrt{2}$\\
    $\sqrt{\Y}$ & $x_{0}$ & $x_{1}$
      & $\omega^{12 x_{0} + 12 x_{0} x_{1}}_{24}$ & $1 / \sqrt{2}$\\
    $\sqrt{\W}$ & $x_{0}$ & $x_{1}$
      & $\omega^{15 x_{0} + 21 x_{1} + 12 x_{0} x_{1}}_{24}$ & $1 / \sqrt{2}$\\
    $\fSim(\pi/2, \pi/6)\quad$ & $x_{0}, x_{1}\quad$ & $x_{1}, x_{0}\quad$
      & $\omega^{18 x_{0} + 18 x_{1} + 10 x_{0} x_{1}}_{24}\quad$ & $1$\\
    $\iSWAP$ & $x_{0}, x_{1}\quad$ & $x_{1}, x_{0}\quad$
      & $\omega^{18 x_{0} + 18 x_{1} + 12 x_{0} x_{1}}_{24}\quad$ & $1$\\
    \bottomrule
  \end{tabular}
\end{table}

We have introduced three distinct gate sets---$\mathcal{Z}$, $\mathcal{T}$, and
$\mathcal{G}$---each characterized by its native gates.
We summarize the discussion in the following theorem.
\begin{theorem}\label{thm:sop}
  For any quantum circuit $C$ of $n$ qubits and $m$ gates in universal gate sets
  $\mathcal{Z}$, $\mathcal{T}$, or $\mathcal{G}$, one can efficiently derive an
  SOP form $\frac{1}{\sqrt{R}} \sum_{y} \omega^{f(x,y)}$, representing the
  tensor of the circuit.
  In the representation, $f(x,y)$ is a multilinear polynomial of $\order{m}$
  terms and degree at most three, and $x$ corresponds to external variables
  assigned to the input and output wires of $C$.
\end{theorem}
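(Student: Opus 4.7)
The plan is to proceed by a direct case analysis on the three gate sets $\mathcal{Z}, \mathcal{T}, \mathcal{G}$, following exactly the labeling recipe described above: assign a fresh variable to each qubit initially, and for each $\Had$ gate (or, more generally, any gate that cannot be written as a diagonal power) introduce a new output-wire variable. Complex gates (such as $\CNOT$, $\Toffoli$, and any other non-diagonal, non-Hadamard gate encountered) are first rewritten via their prescribed decompositions into a constant-size sequence of simple gates; this reduction is purely syntactic and introduces $\order{1}$ fresh internal variables per gate. Thus without loss of generality the circuit is expressed entirely in simple gates, each of which already carries a power-form representation in the appropriate base.

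Next I would verify, gate-by-gate, that each simple gate contributes a polynomial summand with degree at most three and $\order{1}$ terms. For $\mathcal{Z}$ this is immediate from the entries of $\Had, \Z, \CZ, \CCZ$, which give terms of degree $2$, $1$, $2$, and $3$ respectively; for $\mathcal{T}$ the gates $\T, \Had, \CNOT$ (after decomposition) contribute terms of degree at most $2$; for $\mathcal{G}$ the table already lists each gate's contribution as a sum of at most three monomials of degree at most $2$. Summing over all $m$ gates then yields a polynomial $f$ with $\order{m}$ terms and total degree at most three, as claimed. I would also note that multilinearity is preserved: each monomial involves variables attached to distinct wires of a single gate, and repeated occurrences of the same variable (e.g.\ from a diagonal gate reusing its input label) collapse idempotently since all variables are Boolean, so one may assume squarefree monomials throughout.

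To obtain the full SOP form for the circuit, I would invoke the tensor-network interpretation already spelled out in the text: each simple gate is a rank-$k$ tensor whose single nonzero entry (up to a scalar prefactor) is $\omega^{(\text{its monomial})}$, and the tensor-network contraction that computes $\bra{x}U\ket{0^n}$ (or the whole circuit's tensor, if external input/output are left open) sums over all internal wire variables $y$. Because the product of powers is the power of the sum of exponents, the contracted expression is exactly $\frac{1}{\sqrt{R}}\sum_{y} \omega^{f(x,y)}$, where $R$ is the product of the per-gate normalization factors and $\omega$ is the common root of unity obtained by taking the least common multiple of the per-gate moduli (as already justified for $\mathcal{T}$ and $\mathcal{G}$). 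Both $R$ and the base $\omega$ depend only on the gate set and the gate counts, so the construction is efficient (linear time in $m$).

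The main subtlety, and the only step that is not entirely mechanical, is keeping the degree bound of three intact when handling complex gates. For $\CNOT$ and $\Toffoli$ the decompositions given in the excerpt manifestly introduce only degree-$2$ and degree-$3$ monomials respectively, with a single new internal variable $y$; I would simply check that no other non-native gate is used, or if it is, that it admits an analogous decomposition with per-gate degree at most three. The alternative approach of using linear-function wire labels (as in~\cite{DHM+05}) would risk unbounded degree growth under composition, which is precisely why the decomposition route is preferred here. Once this degree control is in hand, the bounds on the number of terms and the overall efficiency of the construction follow immediately by adding up the constant-size contributions of the $m$ gates.
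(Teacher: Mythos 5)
Your proposal is correct and follows essentially the same route as the paper, which itself presents \cref{thm:sop} as a summary of the preceding gate-by-gate discussion: fresh variables per qubit and per $\Had$, power forms for simple gates, decomposition of $\CNOT$/$\Toffoli$ into simple gates with one new internal variable, a common modulus via the least common multiple, and accumulation of $\order{1}$ monomials of degree at most three per gate. The only nits are wording-level (e.g., each simple gate's tensor has \emph{all nonzero entries} of power form rather than a ``single nonzero entry,'' and the $\iSWAP$/$\fSim$ gates are non-diagonal yet reuse input variables in permuted order rather than introducing new ones), neither of which affects the argument.
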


A key advantage of our method is its remarkable flexibility: the approach can be
readily extended to new gate sets without requiring customized implementation,
in contrast to previous methods like~\cite{TJJ21}.
Supporting a new gate set becomes a simple matter of creating a configuration
file that defines the coefficients, monomial terms, and other characteristics of
the sum-of-powers representation for each gate in the set.

\section{Tensor Contraction and Substitution on Sum-of-Powers}\label{sec:sop-tensor}

In the circuit simulation and equivalence checking problems discussed below, it is
convenient to work with the sum-of-powers representations not only for circuits,
but for other derived quantities such as $\bra{a} U_{C} \ket{0^{n}}$ as well.
These quantities are easy to formulate in the sum-of-powers framework given the
following two operations of SOPs.

In tensor networks, the most important procedure for manipulating tensors is the
tensor contraction operation, where two indices are identified and summed
over~\cite{MS08}.
For a sum-of-powers tensor $\sum_{y} \omega^{f(x, y)}$ and two external
variables $x_{1}$ and $x'_{1}$ in $x$, it is natural to consider the
\emph{contraction} of variables $x_{1}$ and $x'_{1}$.
The resulting form is still a sum-of-powers tensor having the form
$\sum_{x_{1}, y} \omega^{f[x_{1}/x'_{1}](x, y)}$, where $f[x_{1}/x'_{1}]$ is a
function obtained by substituting all the variable $x'_{1}$ in $f$ with $x_{1}$.
More generally, let $x_{1}, x_{2}, \ldots, x_{k}$ and
$x'_{1}, x'_{2}, \ldots, x'_{k}$ be the different external variables that are
contracted respectively, then the resulting sum-of-powers tensor is
\begin{equation*}
  \sum_{x_{1}, \ldots, x_{k}, y} \omega^{f[x_{1}/x'_{1},\, \ldots,\, x_{k}/x'_{k}](x, y)},
\end{equation*}
where $f[x_{1}/x'_{1}, \ldots, x_{k}/x'_{k}]$ is the function obtained by
substituting $x'_{1}, x'_{2}, \ldots, x'_{k}$ with $x_{1}, x_{2}, \ldots, x_{k}$
respectively.
Note that even though we write
$f[x_{1}/x'_{1}, \ldots,\allowbreak x_{k}/x'_{k}](x, y)$ but
$f[x_{1}/x'_{1}, \ldots, x_{k}/x'_{k}]$ is a function independent of
$x_{1}', x'_{2}, \ldots, x'_{k}$.

Another procedure we need is variable \emph{substitution}.
Let $\sum_{y} \omega^{f(x, y)}$ be a sum-of-powers tensor, and let
$x_{1}, x_{2}, \ldots, x_{k}$ be external variables in $x$.
For any values $a_{1}, a_{2}, \ldots, a_{k} \in \bin$, the substitution of
$x_{i}$ with $a_{i}$ for $i=1, 2, \ldots, k$ results in a sum-of-powers tensor
$\sum_{y} \omega^{f[a_{1}/x_{1}, \ldots, a_{k}/x_{k}](x, y)}$.
This operation naturally arises when the wire corresponding to variable
$x_{j}$ is contracted with a basis state $\ket{a_{j}}$.
The above discussions establish the following simple claim.
\begin{theorem}\label{thm:sop-op}
  The contraction and substitution of an SOP result in another SOP\@.
\end{theorem}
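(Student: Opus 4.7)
The plan is to verify both operations directly by showing that the explicit expressions given in the preceding paragraphs already have the form $\frac{1}{\sqrt{R}} \sum_{z} \omega^{g(x,z)}$ with $g$ a multilinear polynomial, handling substitution first and then contraction.

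For the substitution case, I would replace each external variable $x_i$ by the constant $a_i \in \{0,1\}$ in every monomial of $f$. If $a_i = 0$, every monomial containing $x_i$ vanishes and is simply dropped; if $a_i = 1$, the factor $x_i$ is removed from those monomials. The remaining variables are untouched, so $f[a_1/x_1, \ldots, a_k/x_k]$ is still multilinear (in fact its degree cannot exceed that of $f$). The internal summation set and the normalization factor $1/\sqrt{R}$ are unchanged, so the resulting expression is manifestly an SOP.

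For the contraction case, I would first perform the syntactic replacement of each $x'_i$ by $x_i$ in every monomial of $f$, then promote $x_1, \ldots, x_k$ from external variables to internal (summed) variables. The normalization factor is inherited without change. The only subtlety is multilinearity: a monomial that originally contained both $x_i$ and $x'_i$ becomes one with a factor of $x_i^2$ after renaming, and similar collisions can occur with multiple simultaneous contractions. Since every variable in an SOP ranges over $\bin$, I would apply the idempotent reduction $x^2 = x$ (and, more generally, $x^r = x$ for any $r \ge 1$) to every offending monomial, collecting like terms afterwards. The result is a genuine multilinear polynomial $f[x_1/x'_1, \ldots, x_k/x'_k]$, and the degree is non-increasing under this process, so the output is again an SOP.

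The main obstacle is purely bookkeeping, namely the idempotent reduction step in the contraction case; everything else is routine. Once that is handled, both halves of the theorem follow at once, and one gets the slightly stronger observation that neither operation increases the degree of the underlying multilinear polynomial nor the normalization constant.
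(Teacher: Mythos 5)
Your proof is correct and follows essentially the same route as the paper, which simply treats the explicit expressions for contraction and substitution given in the preceding paragraphs as already being of SOP form. Your extra care with the idempotent reduction $x^2 = x$ in the contraction case --- needed whenever a monomial contains both a variable and its contraction partner, as can happen e.g.\ when contracting the input and output wires of a Hadamard gate for $\tr U_C$ --- is a genuine subtlety that the paper's one-line justification glosses over, and you handle it correctly.
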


With the contraction and substitution operations established, we can now
efficiently compute the sum-of-powers forms for the two quantities we require
below.
The first is the amplitude $\bra{a} U_{C} \ket{0^{n}}$ where $U_{C}$ is the
unitary matrix for circuit $C$ and $a \in \bin^{n}$ is a computation basis.
Let $\frac{1}{R} \sum_{y} \omega^{f_{C}(x, y)}$ be the sum-of-powers tensor for
circuit $C$.
From \cref{thm:sop}, we know that $f_{C}$ is a summation of $\order{m}$ monomial
terms where $m$ is the number of gates in the circuit.

Let $x^{\lin} = (x^{\lin}_{1}, x^{\lin}_{2}, \ldots, x^{\lin}_{n})$ and
$x^{\lout} = (x^{\lout}_{1}, x^{\lout}_{2}, \ldots, x^{\lout}_{n})$ be the
variables corresponding to the input and output qubits respectively.
Each consists of $n$ different variables, but they may share some common
variables.
This could happen, for example, when all gates acting on a qubit are diagonal,
and the labeling strategy discussed in \cref{sec:sop-circ} will not introduce
new variables for the qubit.
Suppose there are $s \ge 0$ common variables in $x^{\lin}$ and $x^{\lout}$ and
there are qubit indices $j_{1}, \ldots, j_{s}$ and $k_{1}, \ldots, k_{s}$ such
that $x^{\lin}_{j_{i}} = x^{\lout}_{k_{i}}$ for all $i=1, \ldots, s$.
Let $k_{s+1}, \ldots, k_{n}$ be the output qubit indices assigned non-common
variables (that is, they are not in $\{k_{1}, \ldots, k_{s}\}$).
The sum-of-powers tensor for $\bra{a} U_{C} \ket{0^{n}}$ is then $0$ if there is
an $i \in \{1, 2, \ldots, s\}$ such that $a_{k_{i}} \ne 0$ (in other words, this
constitutes a contradictory substitution, as a variable cannot simultaneously be
substituted with both $0$ and $1$), or $\frac{1}{R} \sum_{y} \omega^{f'_{C}(y)}$
where
\begin{equation*}
  f'_{C}(y) = f_{C}[a_{k_{s+1}} / x^{\lout}_{k_{s+1}}, \ldots,
  a_{k_{n}} / x^{\lout}_{k_{n}}, 0 / x^{\lin}_{1}, \ldots, 0 / x^{\lin}_{n}](y).
\end{equation*}
This finishes the discussion on how to represent $\bra{a} U_{C} \ket{0^{n}}$ as
a sum-of-powers tensor network.

Next, we consider the representation for $\tr U_{C}$.
The basic idea is quite simple; we only need to perform a contraction of the
input and output variables.
Since an input variable may become an output variable on another qubit (e.g.,
the $\iSWAP$ gate in the gate set $\mathcal{G}$), we must ensure consistent
substitution of each original variable during the contraction.
Continuing with the above setup, consider the case that there are $s$ common
varibles.
Define a graph $G$ of $n$ vertices containing $s$ edges (or self-loops)
specified by $\{j_{i}, k_{i}\}$.
Consider partitioning of the graph into $t$ connected components
$G_{1}, G_{2}, \ldots, G_{t}$.
Define a new variable $z_{j}$ for each $G_{j}$ where $j=1, 2, \ldots, t$.
Suppose circuit $C$ has the sum-of-powers form
$\frac{1}{\sqrt{R}} \sum_{y} \omega^{f(x, y)}$.
It is easy to convince oneself that the sum-of-powers form for $\tr U_{C}$ is
$\frac{1}{\sqrt{R}} \sum_{z, y} \omega^{f'_{C}(z, y)}$ where $f'$ is the
function obtained by substituting all external variables $x$ in $f_{C}$ using
the following method.
For each input variable $x^{\lin}_{i}$ and output variable $x^{\lout}_{i}$, let
$G_{j_{i}}$ be the connected component to which vertex $i$ belongs, and we
substitute $x^{\lin}_{i}$ and $x^{\lout}_{i}$ with $z_{j_{i}}$.

\section{FeynmanDD:\ Decision Diagram for Sum-of-Powers}\label{sec:feynmandd}

% Represent the polynomial as an MTBDD

In \cref{sec:sop-circ,sec:sop-tensor}, we presented methods for deriving
sum-of-powers representations for circuit and quantities related to the circuit
$\frac{1}{\sqrt{R}} \sum_{y} \omega^{f(x, y)}$, where $R$ is a normalization
factor, $\omega$ is the $r$-th root of unity, and
$f : \bin^{k} \rightarrow \{0, 1, \ldots, r-1\}$ is a multilinear polynomial
with values modulo $r$.
The next step of the FeynmanDD method is to represent the function $f$ using a
variant of BDD called multi-terminal binary decision diagram (MTBDD).

% Implementation considerations

As discussed in \cref{sec:intro}, FeynmanDD departs from most existing
BDD-motivated approaches by utilizing classical decision diagrams as its
underlying data structure.
This strategy facilitates the immediate use of robust and efficient
implementations developed over decades, including CUDD~\cite{Som05},
BuDDy~\cite{Lin99}, Sylvan~\cite{Dij16}, and adiar~\cite{SPJT21}.
Moreover, it seamlessly incorporates powerful variable ordering heuristics.
CUDD and Sylvan offer multi-terminal support, rendering them suitable MTBDD
engines for gate sets of any modulus.
BuDDy and adiar currently support only standard BDDs and are consequently
limited to gate sets with a modulus of $2$.
Our current implementation exclusively employs CUDD, leaving support for
additional packages reserved for future development.

% Simplify $f$ first

In many applications, it is possible to first represent the function $f$ for the
circuit $C$ as an MTBDD and then try to compute the MTBDD for function $f'$ for
the derived quantities such as $\bra{a} U_{C} \ket{0^{n}}$ or $\tr U_{C}$.
But the substitution and contraction usually help to simplify the multilinear
polynomial significantly in the sum-of-powers and thereby also simplify its
MTBDD representation.
It is therefore preferred first to try to use methods in \cref{thm:sop-op}
to transform $f$ at the polynomial representation level and delay the creation
of the MTBDD as late as possible.

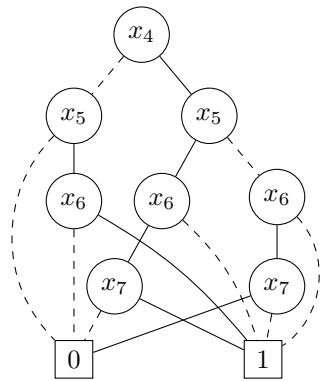
\begin{wrapfigure}[18]{r}{.4\textwidth}
  \centering
  \begin{tikzpicture}[node distance=2cm,scale=0.9,
    state/.style = {draw, circle},
    dots/.style = {dashed},
    terminal/.style = {draw, minimum width=.5cm, minimum height=.5cm}]

    \node[terminal] (zero) at (0,0) {$0$};
    \node[terminal, right=of zero] (one) {$1$};
    \node[state] (71) at (.6, 1.08) {$x_{7}$};
    \node[state] (72) at (3, 1.08) {$x_{7}$};
    \node[state] (61) at (0, 2.34) {$x_{6}$};
    \node[state] (62) at (1.3, 2.34) {$x_{6}$};
    \node[state] (63) at (3, 2.4) {$x_{6}$};
    \node[state] (51) at (0, 3.6) {$x_{5}$};
    \node[state] (52) at (2, 3.6) {$x_{5}$};
    \node[state] (4) at (1, 4.8) {$x_{4}$};

    \draw (4) -- (52);
    \draw (51) -- (61);
    \draw (52) -- (62);
    \draw[bend left=10] (61) to (one);
    \draw (62) -- (71);
    \draw (63) -- (72);
    \draw (71) -- (one);
    \draw (72) -- (zero);

    \draw[dots] (4) -- (51);
    \draw[dots, bend right=45] (51) to (zero);
    \draw[dots] (52) -- (63);
    \draw[dots] (61) to (zero);
    \draw[dots, bend left=15] (62) to (one);
    \draw[dots, bend left=50] (63) to (one);
    \draw[dots] (71) -- (zero);
    \draw[dots] (72) -- (one);
  \end{tikzpicture}
  \caption{BDD for the reduced function
    $x_{4} + x_{5} x_{6} + x_{4} x_{6} x_{7}$.}\label{fig:bdd}
\end{wrapfigure}

For instance, in the example circuit on the left part of \cref{fig:labeling},
the function $f_{C}$ has $7$ variables and $8$ terms.
When representing it as a BDD, it has $28$ nodes, including the constants $0$ and
$1$.
However, when enforcing the initial state condition, variables
$x_{1}, x_{2}, x_{3}$ are set to $0$, and the function reduces to one with
$4$ variables and $3$ terms: $x_{4} + x_{5} x_{6} + x_{4} x_{6} x_{7}$.
Its BDD is much smaller, containing only $10$ nodes as shown in \cref{fig:bdd}.

% How to build the DD from the scratch

Given a function $f:\bin^{n} \rightarrow [r]$ representing a multilinear
polynomial of $m$ terms
\begin{equation*}
  f(x) = \sum_{k=1}^{m} a_{k}\, x_{1}^{i_{k,1}} x_{2}^{i_{k,2}} \cdots x_{n}^{i_{k,n}},
\end{equation*}
for $i_{k,j} \in \bin$, we need to build the BDD representation of it from
scratch.
This represents the most computationally expensive step in our method.
While the summation is commutative, the order of term addition can lead to
dramatically different computational costs.
To illustrate this phenomenon, consider a sequence of Hadamard gates.
The polynomial $f$ takes the form $\frac{r}{2} \sum_{j=0}^{k-1} x_{j} x_{j+1}$.
When variables are ordered sequentially, the final BDD has a size of
$\order{k}$.
However, naive sequential term addition can result in a quadratic
$\Omega(k^{2})$ runtime, becoming inefficient for large $k$.
The quadratic scaling emerges from the computational complexity of adding each
term.
Specifically, the time cost for combining the partial sum with the current term
scales linearly with the MTBDD size of the partial sum.
This linear scaling leads to a quadratic overall runtime through recursive
addition.
To address this efficiency challenge, we employ a \emph{binary synthesis
  method}.
This approach, utilizing the commutativity of addition modulo $r$, partitions
terms into two groups of approximately equal size, first computes results within
each group and then combines them.
For the Hadamard sequence example, this reduces time complexity to
$\order{k \log k}$.
Numerical simulations suggest this binary synthesis method performs effectively
across more generalized scenarios.

% Variable orders

For the construction of a MTBDD, the second order that deserves careful
consideration is the order of the variables in the decision diagram data
structure as it may significantly impact the final size.
Consider again the Hadamard sequence example: ordering the odd variables
$x_{1}, x_{3}, \ldots$ first will cause the BDD size to grow exponentially.
A well-know dynamic variable reordering heuristic is called
sifting~\cite{Rud93}, which is supported by CUDD and is sometimes useful in the
circuit classes used in \cref{sec:exp}.
In our numerical experiments, we explore several other variable ordering
heuristics to mitigate this challenge.
For low-depth circuits, a pragmatic approach involves sequential qubit-based
ordering.
This method prioritizes variables by qubit, ordering all variables for the first
qubit, then the second, and so on.
We term this the \emph{qubit order} strategy.
For circuits with a small number of qubits, the \emph{gate order} heuristics
offer an effective variable ordering strategy.
This approach ranks variables based on the first gate that employs them,
providing a temporally informed sequence.
An alternative method leverages standard tensor contraction optimization tools
like \emph{cotengra} to determine an optimal term order.
Once this order is established, variables are arranged according to their
appearance in the ordered terms.
This approach, which we call the \emph{tensor order} can sometimes be
advantageous for random circuits.
The concept of using tensor contraction complexity to bound BDD size has
precedent in classical verification, where researchers previously explored the
treewidth of CNF formulas~\cite{FPV05}.
Our approach diverges by focusing on the treewidth of XOR formulas.
By employing these sophisticated ordering strategies, we can significantly
optimize the efficiency of the MTBDD creation step.

% Moreover, the choice of variable order can interplay with term summation order
% discussed above.
% Given a specific variable order, an effective heuristic prioritizes terms whose
% variables appear earlier in the sequence, potentially reducing the computational
% complexity of constructing the underlying data structure.
% We term this approach the \emph{derived order}, which proves most useful when
% variables are arranged in qubit order.
% By strategically coordinating the term summation order and the variable order,
% we can sometimes improve the efficiency of the MTBDD construction step.

% tensor bound

The intuition behind the tensor contraction based heuristics is as follows.
A fundamental fact about the size of a Boolean function $f$ given variable order
$x_{1}, x_{2}, \ldots, x_{n}$ is that the number of nodes with label $x_{i}$ is
the different number of functions $f[a_{1}/x_{1}, \ldots, a_{i-1}/x_{i-1}]$ that
essentially depends on $x_{i}$.
This number is upper bounded by $O\bigl( 2^{N_{i}} \bigr)$ where $N_{i}$ is the
number of external wires in the process of contracting the sum-of-powers tensor
network.
While it is a useful ordering heuristics, its performance on linear-network
circuits is poor, indicating that the tensor contraction complexity bound is an
upper bound that may be very loose in certain cases.

% Counting Identity

Another useful technique for improving the efficiency of FeynmanDD is to use the
counting identity $\sum_{x} {(-1)}^{xy+xz} = 2\, \delta_{y,z}$ to simplify the
function $f$ in the SOP\@.
More specifically, when a variable $x$ appears exactly twice in the terms having
form $\frac{r}{2} xx_{0}$ and $\frac{r}{2} xx_{1}$, we have
\begin{equation*}
  \frac{1}{\sqrt{R}} \sum_{x, x_{0}, x_{1}, y} \omega^{f(x,x_{0},x_{1},y)}
  = \frac{1}{\sqrt{R}} \sum_{x_{0}, x_{1}, y} \sum_{x}
      {(-1)}^{x(x_{0} + x_{1})} \omega^{f_{\text{rem}}(x_{0}, x_{1}, y)},
\end{equation*}
which simplifies to
$\frac{2}{\sqrt{R}} \sum_{x_{0}, y} \omega^{f_{\text{rem}}[x_{0}/x_{1}](x_{0},y)}$.
This technique reduces the number of variables and therefore the size of the
MTBDD and is most useful for simulating simple structured circuits.

% mod_plus, high precision

On the technical level, CUDD does not support addition modulo $r$ by default,
and we implement a custom-made function for this and use CUDD's API
$\texttt{Cudd\_addApply}$ to perform the operation on the data structure it
maintains.
Another technical issue worth mentioning is the requirement to handle high
precision integers, as the counters involved for large circuits is huge.
We implement a high precision counting algorithm for CUDD using the \emph{GNU
  multiple precision arithmetic library}.

\section{Quantum Circuit Analysis using FeynmanDD}\label{sec:app}

\subsection{Simulation of Quantum Circuits}\label{sec:simulation}

% Amplitude computation

In quantum circuit simulation, a fundamental problem is computing the amplitude
$\bra{a} C \ket{0}$ for a given $a \in \bin^{n}$.
This single amplitude simulation problem, which appears straightforward, is
actually $\BQP$-complete even for an approximation and $a = 0^{n}$.
Considering a multilinear polynomial $f$ representing the sum-of-powers
representation of $\bra{a} C \ket{0}$ with size $\size{f}$, our simulation
algorithm achieves runtime linear in $\size{f}$.
Consequently, when the multi-terminal binary decision diagram (MTBDD)
representing $f$ is small, we can effectively solve this problem.

The core algorithm reduces the evaluation of the sum-of-powers form to a finite
number of counting problems on the MTBDD\@.
By rewriting the amplitude SOP representation, we can express it as
\begin{equation*}
  \frac{1}{\sqrt{R}} \sum_{y} \omega^{f(y)} =
  \frac{1}{\sqrt{R}} \sum_{j=0}^{r-1} N_{j} \omega^{j},
\end{equation*}
where $N_{j} = \bigl| \bigl\{ y \mid f(y) \equiv j \pmod{r} \bigr\} \bigr|$
represents the number of inputs to $f$ that evaluate to
$j \in \{0, 1, \ldots, r-1\}$.
Given the MTBDD representing $f$, there are algorithms that can count the
numbers $N_{j}$ in time $\order{m \size{f}}$ where $m$ is the bit length of the
counting result~\cite{Knu09,Weg00}.
The number $m$ is bounded by the number of gates, but usually much smaller and
can often be considered constant in practice.

% Probability

In quantum circuit simulation, another common task is computing the acceptance
probability for a quantum circuit.
Consider a circuit $C$ acting on the initial state $\ket{0^{n}}$, with its
sum-of-powers representation expressed as $\frac{1}{\sqrt{R}}\sum_{y} \omega^{f(x,y)}$,
where $x$ represents variables labeling the output qubits, the task is to
estimate the probability of observing $1$ when measuring the first qubit.
We introduce a derived function $F(x, y, y') = f(x, y) - f(x, y')$ that operates
on variables $x, y, y'$, with $y'$ being a new set of variables matching $y$'s
size.
By defining $x_{1}$ as the variable of the qubit to be measured and $x_{>1}$ as
the remaining variables, the acceptance probability can be formulated as:
\begin{equation*}
  \begin{split}
    \Pr(C \text{ accepts})
    & = \bra{0^{n}} C^{\dagger} \bigl( \ket{1}\bra{1} \otimes I \bigr)
      C \ket{0^{n}}\\
    & = \frac{1}{R} \sum_{x_{>1}, y, y'}\omega^{f[1/x_{1}](x_{>1},y)-f[1/x_{1}](x_{>1}, y')}\\
    & = \frac{1}{R} \sum_{x_{>1}, y, y'}\omega^{F[1/x_{1}](x_{>1}, y, y')},
  \end{split}
\end{equation*}
where the contraction and substitution operations of SOPs are employed.
Hence, the probability can be represented as
$\frac{1}{R} \sum_{z} \omega^{F[1/x_{1}](z)}$ for $z = (x_{>1}, y, y')$.
Notice that $\size{F[1/x_{1}]} \le \size{F} \le \order{\size{f}^{2}}$, the
algorithm has a quadratic complexity in terms of $\size{f}$.
In practice, we indeed observe that it is much less efficient compared with the
linear complexity of the amplitude computation task.
Yet, its performance is still comparably well for many circuit families.

% Sampling

Using a similar technique, we can estimate joint probabilities of measuring
output qubits corresponding to variables $x_{1}, \ldots, x_{j}$ as
\begin{equation*}
  \Pr[x_{1} = a_{1}, \ldots, x_{j} = a_{j}] =
  \frac{1}{R} \sum_{x_{>j}, y, y'} \omega^{F[a_{1}/x_{1}, \ldots,
    a_{j}/x_{j}](x_{>j}, y, y')},
\end{equation*}
where $x_{>j}$ are the remaining variables in $x$ excluding
$x_{1}, \ldots, x_{j}$.
This allows us to compute the conditional probabilities using the conditional
probability formula
\begin{equation*}
  \Pr[ x_{j} = a_{j} \,|\, x_{1} = a_{1}, \ldots, x_{j-1} = a_{j-1}] =
  \frac{\Pr[x_{1} = a_{1}, \ldots, x_{j} = a_{j}]}{\Pr[x_{1} = a_{1},
    \ldots, x_{j-1} = a_{j-1}]}.
\end{equation*}
for all $j$ and $a_{1}, a_{2}, \ldots, a_{j} \in \bin$.
We can leverage this method to sample from the output distribution of
$C\ket{0^{n}}$ using a simple sequential sampling algorithm.
The process begins by computing the probability $p_{1} = \Pr[x_{1} = 0]$ and
sampling the first bit with probabilities $p_{1}$ and $1 - p_{1}$.
For subsequent bits, we compute
$p_{j} = \Pr[x_{j} = 0 \,|\, x_{1} = a_{1}, \ldots, x_{j-1} = a_{j-1}]$ and
sample accordingly.
By repeating this process for $j=2, 3, \ldots, n$, we complete the sampling of
all $n$ output bits.

% Single-qubit gate sequence

We remark that even though our simulation method requires a discrete universal
gate set and cannot deal with arbitrary single-qubit rotations directly, it is
possible to work with such circuits by expanding the rotations using
Solovay-Kitaev theorem~\cite{Daw19} or methods in~\cite{GS13,RS16} to a sequence
of, for example, $\Had$ and $\T$ gates.
As $\Had$ will create a new variable and $\T$ gates will not, the gate sequence
will introduce terms of the form $\frac{r}{2} \sum_{j=1}x_{j}x_{j+1} + \ell(x)$
where $x_{2}, x_{3}, \ldots$ are new variables introduced by the $\Had$ gates
and $\ell(x)$ is a linear function.
By ordering $x_{2}, x_{3}, \ldots$ after $x_{1}$, the resulting BDD will have a
size not very sensitive to the length of the gate sequence.

\subsection{Circuit Equivalence Checking}\label{sec:equivalence}

Circuit equivalence checking is another important task for quantum circuit
analysis and optimization~\cite{PBW22,HYF+21,HFLY22}.
Usually, BDD-based methods are ideal for such a task thanks to the uniqueness of
BDD representations, and the circuit equivalence checking problem is reduced to
check whether the BDDs for two circuits are identical.
In our case, however, the BDD's for equivalent circuits may be dramatically
different as it is based on the classical syntactical description of the
circuit, not the semantic meaning (the unitary operator) of the circuit.

Fortunately, however, we are still able to use FeynmanDD to compute the trace of
the unitary operator $\tr U_{C}$ for circuit $C$.
Given two circuits $C_{0}, C_{1}$ of $n$ qubits,
$\tr(U^{\dagger}_{C_{0}} U_{C_{1}}) = 2^{n} \omega^{j}$ for some $j$, if and
only if $C_{0}$ and $C_{1}$ are equivalent.
This is a fact that was utilized in many previous works on circuit equivalence
checking~\cite{HYF+21,PBW22}.
Our method is then to first obtain the sum-of-powers form for
$\frac{1}{2^{n}} \tr(U^{\dagger}_{C_{0}} U_{C_{1}})$ and build the corresponding
FeynmanDD for it as explained in \cref{sec:feynmandd}.
And the counting based method as in \cref{sec:simulation} is used for circuit
simulation to evaluate the sum-of-powers form.
We know that the two circuits are equivalent (up to a global phase) if the value
has a unit norm.
By using the multiple precision arithmetic library, the computation is performed
exactly which is crucial for the equivalence checking application.
Numerical experiments on using FeynmanDD for equivalence checking are discussed
in \cref{sec:equivalence_check_results}.

\section{Circuit Simulation Experiments and Comparisons}\label{sec:exp}

To evaluate the performance of FeynmanDD, we conducted a series of experiments
comparing it with (a) three state-of-the-art quantum circuit simulators,
DDSIM~\cite{ZW19,ZHW19}, SliQSim~\cite{TJJ21}, and WCFLOBDD~\cite{SCR23,SCR24}
and (b) MQT-QCEC (https://github.com/cda-tum/qcec) for the task of equivalence
checking, a widely used tool that integrates multiple equivalence checking
techniques.
The experiments were performed on a server equipped with two Intel (R) Xeon (R)
Platinum 8358P CPUs @ 2.60 GHz (64 cores, 128 threads in total) and 512 GB of
memory.
However, as CUDD does not support parallel computing, each simulation is
performed using a single thread and our method barely uses more than 1 GB of
memory.

Our numerical experiments cover three types of computational tasks: (1) the
calculation of the amplitude $\bra{0} U \ket{0}$, (2) the simulation of sampling
a full computational-basis measurement outcome, and (3) the equivalence checking
of two given circuits.
Furthermore, in the amplitude and sampling tasks, we measured the runtime and
peak memory usage during the execution of these tasks by the program.
Each simulation was limited to 3600 seconds (1 hour).
For simplicity, we use qubit order to arrange the variables in all our
experiments.

We tested four different families of circuits for quantum circuit simulation
tasks: Google supremacy circuits, GHZ circuits, BV circuits, and a specially
constructed family called linear-network circuits.
The results are summarized in
\cref{tab:googleresults-amp,tab:googleresults-samp,tab:BVresults,tab:GHZresults,tab:linearnetworkresults}.
We present the comparison of circuit equivalence checking experiments in
\cref{sec:equivalence_check_results}.

% We evaluated the performance of FeynmanDD for the task of equivalence checking
% by comparing it with MQT-QCEC (https://github.com/cda-tum/qcec), a widely used
% tool that integrates multiple equivalence checking techniques.
% To conduct the tests, both input circuits needed to use discrete gate sets
% supported by FeynmanDD\@.
% Specifically, we selected (1) a subset of quantum circuits from
% RevLib~\cite{WGT+08} that FeynmanDD can process and (2) some quantum circuits
% used in \cref{sec:exp}.
% The experimental setup followed the same configuration as described in
% \cref{sec:exp}.
% The time limit for each checking was set to 600 seconds (10 minutes).

\subsection{Google Supremacy Circuits}\label{sec:google_supremacy_exp}

We used benchmarks from the GRCS
repository\footnote{https://github.com/sboixo/GRCS}: \texttt{cz\_v2} and
\texttt{is\_v1} circuits.
The gate sets are
$\{ \text{H}, \sqrt{\text{X}}, \sqrt{\text{Y}}, \text{T}, \text{CZ} \}$ and
$\{ \text{H}, \sqrt{\text{X}}, \sqrt{\text{Y}}, \text{T}, \text{iSWAP} \}$
respectively.
Since SliQSim does not natively support $\text{iSWAP}$, the corresponding
results in \cref{tab:googleresults-amp,tab:googleresults-samp} were obtained via
the decomposition
\begin{equation*}
  \text{iSWAP} = (\text{S} \otimes \text{S}) (\text{H} \otimes \text{I})
  \text{CNOT}_{0, 1} \text{CNOT}_{1, 0} (\text{I} \otimes \text{H}).
\end{equation*}
For WCFLOBDD, since it does not natively support $\text{Rx}(\frac{\pi}{2})$ and
$\text{Ry}(\frac{\pi}{2})$, we decomposed them as
$\text{Rx}(\frac{\pi}{2})=\text{HSH}$ and $\text{Ry}(\frac{\pi}{2})=\text{HZ}$.
Moreover, since WCFLOBDD provides only an interface for computing probabilities
rather than amplitudes, and this interface did not function correctly in our
tests, we conducted only the \texttt{Single Sample Output} test for WCFLOBDD\@.
Each benchmark consists of ten circuits, and the table reports the average time
and memory usage across these ten circuits.

FeynmanDD significantly outperformed DDSIM, SliQSim, and WCFLOBDD in runtime and
memory usage.
For \texttt{cz\_v2} circuits, FeynmanDD was much faster and consumed less
memory.
Larger \texttt{cz/5x5\_10} circuits timed out for DDSIM, SliQSim, and WCFLOBDD,
while FeynmanDD completed within 0.05 seconds for amplitude estimation and 95
seconds for sampling.
Due to excessive memory consumption, WCFLOBDD was severely limited in this test.
Similar results were observed for \texttt{is\_v1} circuits.

\begin{table}[htbp!]
  \centering
  \fontsize{6pt}{7}\selectfont
  \caption{Quantum circuit simulation benchmarks on Google supremacy circuits
    for the task of zero to zero amplitude computation.
    In this and the following tables, $n$ stands for the number of qubits, $m$
    is the number of gates, time is in seconds (s), memory is in MB, TO means
    timeout.
    }\label{tab:googleresults-amp}
  \begin{tabular}{cccrrrrrr}
    \toprule
    \multirow{2}*{Circuit} & \multirow{2}*{$n$} & \multirow{2}*{$m$}
    & \multicolumn{6}{c}{Zero to Zero Amplitude} \\
    % & \multicolumn{8}{c}{Single Output Sample} \\
    \cmidrule(lr){4-9} % \cmidrule(lr){10-17}
    & & & \multicolumn{2}{c}{DDSIM} & \multicolumn{2}{c}{SliQSim} & \multicolumn{2}{c}{Ours} \\
    %  & \multicolumn{2}{c}{DDSIM} & \multicolumn{2}{c}{SliQSim} & \multicolumn{2}{c}{WCFLOBDD}& \multicolumn{2}{c}{Ours}\\
    \midrule
    & & & Time & Mem & Time & Mem & Time & Mem
    \\[2pt]
    cz/4x4\_10 & 16 & 115 & 0.64 & 48.0 & 5.8 & 81.3 & 0.01 & 12.1 \\
    % & 0.53 & 48.00 & 6.10 & 153.53 & 184.32 & 18160.96 & 4.94 & 42.76\\
    cz/4x5\_10 & 20 & 145 & 67.6 & 362.3 & 666.9 & 255.5 & 0.03 & 12.2 \\
    % & 71.65 & 362.31 & 785.91 & 355.38 & TO & & 34.66 & 101.94 \\
    cz/5x5\_10 & 25 & 184 & TO & & TO & & 0.04 & 12.2 \\
    % & TO & & TO & & TO & & 94.63 & 132.98 \\
    is/4x4\_10 & 16 & 115 & 1.0 & 55.7 & {\cellcolor[gray]{0.9}10.6} & {\cellcolor[gray]{0.9}128.1} & 0.03 & 12.3 \\
    % & 1.05 & 55.72 & {\cellcolor[gray]{0.9}12.74} & {\cellcolor[gray]{0.9}163.16} & 259.91 & 22349.67 & 7.62 & 76.02\\
    is/4x5\_10 & 20 & 145 & 158.2 & 477.5 & {\cellcolor[gray]{0.9}976.9} & {\cellcolor[gray]{0.9}245.4} & 0.1 & 15.1 \\
    % & 178.52 & 477.57 & {\cellcolor[gray]{0.9}1020.1} & {\cellcolor[gray]{0.9}378.28} & TO & & 25.36 & 116.72 \\
    is/5x5\_10 & 25 & 184 & TO & & {\cellcolor[gray]{0.9}TO} & {\cellcolor[gray]{0.9}} & 0.2 & 20.9 \\
    % & TO & & {\cellcolor[gray]{0.9}TO} & {\cellcolor[gray]{0.9}} & TO & & 153.20{\color{red} \tiny (1)} & 156.15\\
    \bottomrule
  \end{tabular}
  \vspace{2.5em}
% \end{table}
% \begin{table}[htbp!]
  \centering
  \fontsize{6pt}{7}\selectfont
  \caption{Quantum circuit simulation benchmarks on Google supremacy circuits
    for the task of sampling.
    The {\color{red} \tiny ($x$)} mark indicates that {\color{red} $x$} tests
    timeout among the ten circuits tested in the group, and the shown value is
    derived from the average of the remaining $10 - {\color{red} x}$ test
    results.}\label{tab:googleresults-samp}
  \begin{tabular}{cccrrrrrrrr}
    \toprule
    \multirow{2}*{Circuit} & \multirow{2}*{$n$} & \multirow{2}*{$m$}
    % & \multicolumn{6}{c}{Zero to Zero Amplitude} \\
    & \multicolumn{8}{c}{Single Output Sample} \\
    \cmidrule(lr){4-11} % \cmidrule(lr){10-17}
    & & 
    % & \multicolumn{2}{c}{DDSIM} & \multicolumn{2}{c}{SliQSim} & \multicolumn{2}{c}{Ours} \\
     & \multicolumn{2}{c}{DDSIM} & \multicolumn{2}{c}{SliQSim} & \multicolumn{2}{c}{WCFLOBDD}& \multicolumn{2}{c}{Ours}\\
    \midrule
    & & & Time & Mem & Time & Mem & Time & Mem & Time & Mem
    \\[2pt]
    cz/4x4\_10 & 16 & 115 
    % & 0.64 & 48.0 & 5.8 & 81.3 & 0.01 & 12.1 \\
    & 0.53 & 48.00 & 6.10 & 153.53 & {\cellcolor[gray]{0.9} 184.32} & {\cellcolor[gray]{0.9}18160.96} & 4.94 & 42.76\\
    cz/4x5\_10 & 20 & 145 
    % & 67.6 & 362.3 & 666.9 & 255.5 & 0.03 & 12.2 \\
    & 71.65 & 362.31 & 785.91 & 355.38 & {\cellcolor[gray]{0.9}TO} & {\cellcolor[gray]{0.9}}& 34.66 & 101.94 \\
    cz/5x5\_10 & 25 & 184 
    % & TO & & TO & & 0.04 & 12.2 \\
    & TO & & TO & & {\cellcolor[gray]{0.9}TO} & {\cellcolor[gray]{0.9}} & 94.63 & 132.98 \\
    is/4x4\_10 & 16 & 115 
    % & 1.0 & 55.7 & {\cellcolor[gray]{0.9}10.6} & {\cellcolor[gray]{0.9}128.1} & 0.03 & 12.3 \\
    & 1.05 & 55.72 & {\cellcolor[gray]{0.9}12.74} & {\cellcolor[gray]{0.9}163.16} & {\cellcolor[gray]{0.9}259.91} & {\cellcolor[gray]{0.9}22349.67} & 7.62 & 76.02\\
    is/4x5\_10 & 20 & 145 
    % & 158.2 & 477.5 & {\cellcolor[gray]{0.9}976.9} & {\cellcolor[gray]{0.9}245.4} & 0.1 & 15.1 \\
    & 178.52 & 477.57 & {\cellcolor[gray]{0.9}1020.1} & {\cellcolor[gray]{0.9}378.28} & {\cellcolor[gray]{0.9}TO} & {\cellcolor[gray]{0.9}} & 25.36 & 116.72 \\
    is/5x5\_10 & 25 & 184 
    % & TO & & {\cellcolor[gray]{0.9}TO} & {\cellcolor[gray]{0.9}} & 0.2 & 20.9 \\
    & TO & & {\cellcolor[gray]{0.9}TO} & {\cellcolor[gray]{0.9}} & {\cellcolor[gray]{0.9}TO} &{\cellcolor[gray]{0.9}} & 153.20{\color{red} \tiny (1)} & 156.15\\
    \bottomrule
  \end{tabular}
\end{table}

\subsection{GHZ and BV Circuits}\label{sec:ghz_bv_exp}

We tested two types of circuits in this part: GHZ circuits and BV circuits.
GHZ circuits generate $n$-qubit GHZ states, while BV circuits implement the
Bernstein-Vazirani algorithm with the all-ones secret string.
These circuits were chosen as they generate simple yet highly entangled states.

\begin{table}[htbp!]
  \centering
  \fontsize{6pt}{7}\selectfont
  \caption{Quantum circuit simulation benchmarks on BV circuits.
    Time is measured in seconds (s), memory usage is meausred in
    MB.}\label{tab:BVresults}
  \begin{tabular}{ccrrrrrrrrrrrrrr}
    \toprule
    \multirow{2}*{Qubits} & \multirow{2}*{Gates}
    & \multicolumn{6}{c}{Zero to Zero Amplitude}
    & \multicolumn{8}{c}{Single Output Sample} \\
    \cmidrule(lr){3-8} \cmidrule(r){9-16}
    & & \multicolumn{2}{c}{DDSIM} & \multicolumn{2}{c}{SliQSim} & \multicolumn{2}{c}{Ours}
    & \multicolumn{2}{c}{DDSIM} & \multicolumn{2}{c}{SliQSim} & \multicolumn{2}{c}{WCFLOBDD}& \multicolumn{2}{c}{Ours}\\
    \midrule
    & & Time & Mem & Time & Mem & Time & Mem & Time & Mem & Time & Mem & Time & Mem & Time & Mem\\[2pt]
    100 & 299 & 0.002 & 71.4 & 0.05 & 14.5 & 0.005 & 12.2
    & \multicolumn{2}{c}{\cellcolor[gray]{0.9}} & 0.05 & 14.62 & 0.04 & 480.02 & 0.007 & 14.54 \\
    500 & 1499 & 0.04 & 272.2 & 3.7 & 66.0 & 0.06 & 12.1
    & \multicolumn{2}{c}{\cellcolor[gray]{0.9}} & 3.96 & 66.15 & 0.12 & 487.08 & 0.12 & 13.98 \\
    1000 & 2999 & 0.3 & 531.7 & 104.6 & 116.1 & 0.2 & 13.4
    & \multicolumn{2}{c}{\cellcolor[gray]{0.9}} & 118.02 & 116.53 & 0.23 & 496.66 & 0.47 & 13.98 \\
    5000 & 14999 & 19.5 & 2540 & 506.0 & 234.1 & 5.8 & 65.9
    & \multicolumn{2}{c}{\cellcolor[gray]{0.9}} & 520.05 & 234.93 & 1.21 & 592.82 &  10.27 & 17.39 \\
    10000 & 29999 & 138.7 & 5051 & 394.3 & 396.4 & 24.2 & 98.0
    & \multicolumn{2}{c}{\cellcolor[gray]{0.9}\multirow{-5}*{Error}} 
    & 367.66 & 398.11 & 2.37 & 717.26 & 42.74 & 28.49 \\
    \bottomrule
  \end{tabular}
\end{table}

\begin{table}[htbp!]
  \centering
  \fontsize{6pt}{7}\selectfont
  \caption{Quantum circuit simulation benchmarks on GHZ circuits.
    Time is measured in seconds (s), memory usage is meausred in
    MB.}\label{tab:GHZresults}
  \begin{tabular}{ccrrrrrrrrrrrrrr}
    \toprule
    \multirow{2}*{Qubits} & \multirow{2}*{Gates}
    & \multicolumn{6}{c}{Zero to Zero Amplitude}
    & \multicolumn{8}{c}{Single Output Sample} \\
    \cmidrule(lr){3-8} \cmidrule(lr){9-16}
    & & \multicolumn{2}{c}{DDSIM} & \multicolumn{2}{c}{SliQSim} & \multicolumn{2}{c}{Ours}
    & \multicolumn{2}{c}{DDSIM} & \multicolumn{2}{c}{SliQSim} & \multicolumn{2}{c}{WCFLOBDD} & \multicolumn{2}{c}{Ours}\\
    \midrule
    & & Time & Mem & Time & Mem & Time & Mem & Time & Mem & Time & Mem & Time & Mem & Time & Mem\\[2pt]
    100 & 100 & 0.0019 & 71.2 & 0.01 & 12.5 & 0.0023 & 12.2
    & 0.002 & 71.29 & 0.009 & 14.19 & 0.03 & 478.84 & 0.005 & 13.98 \\
    500 & 500 & 0.04 & 271.8 & 0.07 & 23.9 & 0.01 & 12.3
    & 0.04 & 271.72 & 0.06 & 24.36 & 0.04 & 479.87 & 0.05 & 13.98 \\
    1000 & 1000 & 0.3 & 530.3 & 0.9 & 48.2 & 0.04 & 12.8
    & 0.26 & 530.34 & 0.91 & 48.93 & 0.06 & 481.52 & 0.18 & 13.98 \\
    5000 & 5000 & 19.9 & 2535 & 17.2 & 192.4 & 1.4 & 64.5
    & 19.58 & 2535.00 & 18.73 & 193.14 & 0.22 & 495.71 & 4.14 & 13.98 \\
    10000 & 10000 & 142.1 & 5041 & 74.9 & 324.6 & 6.2 & 66.7
    & 144.98 & 5041.17 & 78.12 & 326.00 & 0.43 & 512.53 & 16.68 & 18.71 \\
    \bottomrule
  \end{tabular}
\end{table}

As shown in \cref{tab:BVresults,tab:GHZresults}, FeynmanDD performed better than
DDSIM and SliQSim in terms of both time and memory usage.
The results confirm that FeynmanDD is capable of efficiently simulating these
relatively simple circuits and can handle larger sizes of GHZ and BV circuits
with little additional cost.
In these tests, WCFLOBDD achieves the fastest runtime, while its memory usage is
significantly higher than that of FeynmanDD\@.

\subsection{Linear Network Circuits}\label{sec:linear_network_exp}

These are IQP circuits based on degree-$3$ polynomials~\cite{Mon17}
$f: \bin^{n} \to \bin$ as follows:
\begin{equation}\label{eq:small_bdd_f}
  f(x) = A(x) \sum_{i=1}^n x_i + \sum_{i=1}^{n-k+1} C_{i:i+k-1},
\end{equation}
where $A(x) := \sum_{i=1}^n \alpha_i x_i$, with $\alpha_i$ randomly selected
from $\bin$ and $k = \order{\log{n}}$ $C_{i:j}$ consists exclusively of
degree-$3$ terms involving variables $x_i, \ldots, x_j$.
The design purposefully ensures that the second term $C_{i:i+k-1}$ involves $k$
consecutive variables, guaranteeing that the number of forward signals for each
module in Figure 23 of~\cite{Knu09} remains bounded by $k+1$.
An important reason to include the $C_{i:i+k-1}$ terms in this construction is
that they are implemented using CCZ gates, which are non-Clifford and prevent the
use of the Gottesman-Knill algorithm~\cite{Got98,AG04}. A more comprehensive analysis 
of this circuit family will be presented in subsequent work.
For each $(n, k)$ pair, ten circuits were generated.
The values (including the number of gates) in \cref{tab:linearnetworkresults}
are the average values.

\Cref{tab:linearnetworkresults} demonstrates FeynmanDD's strength in simulating
linear-network circuits.
For fixed $n$, increasing $k$ moderately increased FeynmanDD's runtime, but
still much faster than DDSIM\@.
For instance, when $n=30$, $k=7$ in amplitude task, FeynmanDD took $0.003$s and
$12$MB, while DDSIM required approximately $1065$s and $662$MB on average.
For the simulation task, FeynmanDD took $1$s and $26$MB while DDSIM used $1428$s
and $722$MB on average.
In contrast, WCFLOBDD was unable to complete the computation within one hour.

\begin{table}[htbp!]
  \centering
  \fontsize{6pt}{7}\selectfont
  \caption{Quantum circuit simulation benchmarks on linear-network circuits.
    In the table $n$ is the number of qubits, $k$ is a constant measuring gate
    locality.
    Time is measured in seconds (s), memory usage is meausred in MB\@.
    SliQSim is not compared as it currently does not support $\CCZ$
    gates.}\label{tab:linearnetworkresults}
  \begin{tabular}{cccrrrrrrrrrr}
    \toprule
    \multirow{2}*{$n$} & \multirow{2}*{$k$} & \multirow{2}*{Gates}
    & \multicolumn{4}{c}{Zero to Zero Amplitude}
    & \multicolumn{6}{c}{Single Output Sample} \\
    \cmidrule(lr){4-7} \cmidrule(lr){8-13}
    & & & \multicolumn{2}{c}{DDSIM} & \multicolumn{2}{c}{Ours}
    & \multicolumn{2}{c}{DDSIM} & \multicolumn{2}{c}{WCFLOBDD} & \multicolumn{2}{c}{Ours}\\
    \midrule
    & & & Time & Mem & Time & Mem & Time & Mem & Time & Mem & Time & Mem \\[2pt]
    20 & 5 & 162.0 & 0.20 & 34.17 & 0.002 & 12.15 & 0.19 & 34.21 & 187.06 & 12466.84 & 0.14 & 13.79\\
    20 & 7 & 163.0 & 0.30 & 38.20 & 0.003 & 12.13 & 0.29 & 38.23 & 464.12 & 21069.78 & 0.64 & 16.74\\
    30 & 5 & 319.8 & 459.07 {\color{red} \tiny (1)} & 157.49  & 0.003 & 12.13
        & 460.53 {\color{red} \tiny (1)} & 157.56 & TO & & 0.34 & 13.92\\
    30 & 7 & 321.1 & 1065.42 {\color{red} \tiny (4)} & 662.17 & 0.003 & 12.14
        & 1428.32 {\color{red} \tiny (3)} & 721.57 & TO & & 1.00 & 26.06\\
    40 & 5 & 527.3 & TO &  & 0.004 & 12.17 & TO & & TO & & 0.75 & 26.63\\
    40 & 7 & 531.7 & TO &  & 0.005 & 12.13 & TO & & TO & & 1.60 & 35.47\\
    \bottomrule
  \end{tabular}
\end{table}

\section{Circuit Equivalence Checking Experiments and Results}%
\label{sec:equivalence_check_results}

We evaluated the performance of FeynmanDD for the task of equivalence checking
by comparing it with MQT-QCEC (https://github.com/cda-tum/qcec), a widely used
tool that integrates multiple equivalence checking techniques.
To conduct the tests, both input circuits needed to use discrete gate sets
supported by FeynmanDD\@.
Specifically, we selected (1) a subset of quantum circuits from
RevLib~\cite{WGT+08} that FeynmanDD can process and (2) some quantum circuits
used in \cref{sec:exp}.
The experimental setup followed the same configuration as described in
\cref{sec:exp}.
The time limit for each check was set to 600 seconds (10 minutes).

\begin{table}[htb!]
  \centering
  \fontsize{6pt}{7}\selectfont
  \caption{Quantum circuit equivalence check benchmarks.
    $n$ stands for the number of qubits, $m$ is the number of gates of original
    circuit, $m'$ is the number of gates of transformed circuit, time is in
    seconds (s), TO means timeout.}\label{tab:equivalence_check_results}
  \begin{tabular}{lrrrrrrrrr}
    \toprule
    \multirow{2}*{Circuit} & \multirow{2}*{$n$} & \multirow{2}*{$m$} & \multirow{2}*{$m'$}
    & \multicolumn{2}{c}{Equivalent} & \multicolumn{2}{c}{Missing} & \multicolumn{2}{c}{Reverse} \\
    \cmidrule(lr){5-6} \cmidrule(lr){7-8} \cmidrule(lr){9-10}
    & & & & MQT-QCEC & Ours & MQT-QCEC & Ours & MQT-QCEC & Ours \\
    \midrule
    0410184\_169 & 14 & 46 & 46 & 1.73 & 0.08 & 1.51 & 0.32 & 1.25 & 0.43\\
    4gt11-v1\_85 & 5 & 8 & 8 & 0.45 & 0.02 & 0.32 & 0.02 & 0.45 & 0.02\\
    alu-v0\_27 & 5 & 11 & 11 & 0.49 & 0.02 & 0.48 & 0.01 & 0.46 & 0.01\\
    alu-v1\_29 & 5 & 12 & 12 & 0.54 & 0.02 & 0.44 & 0.02 & 0.37 & 0.003\\
    alu-v2\_33 & 5 & 12 & 12 & 0.08 & 0.002 & 0.06 & 0.002 & 0.06 & 0.002\\
    alu-v3\_35 & 5 & 12 & 12 & 0.08 & 0.002 & 0.07 & 0.002 & 0.07 & 0.002\\
    alu-v4\_37 & 5 & 12 & 12 & 0.18 & 0.002 & 0.16 & 0.003 & 0.12 & 0.003\\
    apex2\_289 & 498 & 1785 & 1779 & 38.13 & TO & 36.36 & TO & 137.44 & TO\\
    avg16\_324 & 576 & 3996 & 3996 & 57.39 & TO & TO & TO & TO & TO\\
    avg8\_325 & 320 & 2013 & 2013 & 27.39 & TO & TO & TO & TO & TO\\
    bw\_291 & 87 & 312 & 312 & 0.90 & 2.33 & 0.96 & 3.39 & TO & 6.80\\
    c2\_181 & 35 & 116 & 116 & 2.87 & TO & 0.56 & 81.36 & 2.60 & TO\\
    cps\_292 & 923 & 2787 & 2779 & 35.76 & TO & 27.50 & TO & TO & TO\\
    cycle10\_293 & 39 & 90 & 88 & 3.13 & 0.03 & 3.21 & 0.02 & 5.04 & 0.06\\
    e64-bdd\_295 & 195 & 452 & 452 & 13.78 & 0.41 & 28.31 & 0.20 & 13.22 & 0.18\\
    ham7\_106 & 7 & 32 & 32 & 1.92 & 4.15 & 2.28 & 2.53 & 1.80 & 4.02\\
    ham7\_299 & 21 & 68 & 62 & 0.59 & 0.01 & 0.77 & 0.005 & 0.69 & 0.005\\
    pdc\_307 & 619 & 2096 & 2096 & 11.52 & TO & 11.60 & TO & 412.45 & TO\\
    spla\_315 & 489 & 1725 & 1725 & 34.94 & TO & 21.39 & TO & 27.96 & TO\\
    sym6\_316 & 14 & 35 & 35 & 1.08 & 0.02 & 0.98 & 0.03 & 0.98 & 0.03\\
    sym9\_317 & 27 & 71 & 71 & 3.05 & 0.12 & 1.87 & 0.27 & 3.17 & 0.38\\
    \midrule
    GHZ\_100 & 100 & 100 & 100 & 4.88 & 0.01 & 4.96 & 0.01 & 5.00 & 0.01\\
    GHZ\_500 & 500 & 500 & 500 & 13.22 & 0.35 & 13.47 & 0.32 & 13.29 & 0.47\\
    GHZ\_1000 & 1000 & 1000 & 1000 & 15.16 & 1.35 & 15.04 & 1.49 & 15.29 & 1.21\\
    GHZ\_5000 & 5000 & 5000 & 5000 & 311.40 & 26.10 & 354.13 & 26.36 & 320.29 & 25.71\\
    GHZ\_10000 & 10000 & 10000 & 10000 & TO & 91.53 & TO & 91.72 & TO & 89.18\\
    BV100 & 100 & 299 & 299 & 4.03 & 0.04 & 4.22 & 0.08 & 4.58 & 0.04\\
    BV500 & 500 & 1499 & 1499 & 13.27 & 0.79 & 13.57 & 0.88 & 13.88 & 0.78\\
    \midrule
    linear\_20\_5\_1\_0 & 20 & 208 & 184 & 1.96 & 0.01 & 1.76 & 0.01 & \multicolumn{2}{c}{\cellcolor[gray]{0.9}} \\
    linear\_20\_7\_1\_0 & 20 & 205 & 181 & 1.99 & 0.005 & 3.71 & 0.004 & \multicolumn{2}{c}{\cellcolor[gray]{0.9}} \\
    linear\_30\_5\_1\_0 & 30 & 380 & 346 & 2.58 & 0.03 & 2.76 & 0.02 & \multicolumn{2}{c}{\cellcolor[gray]{0.9}} \\
    linear\_30\_7\_1\_0 & 30 & 407 & 369 & 2.05 & 0.01 & 5.15 & 0.03 & \multicolumn{2}{c}{\cellcolor[gray]{0.9}} \\
    linear\_40\_5\_1\_0 & 40 & 645 & 601 & 2.48 & 0.02 & 2.01 & 0.02 & \multicolumn{2}{c}{\cellcolor[gray]{0.9}} \\
    linear\_40\_7\_1\_0 & 40 & 596 & 540 & 3.39 & 0.01 & 4.06 & 0.01 & \multicolumn{2}{c}{\cellcolor[gray]{0.9}\multirow{-6}*{No CNOT}} \\
    \midrule
    cz\_v2/4x4\_10 & 16 & 115 & 181 & 1.24 & 0.29 & 1.34 & 0.42  & \multicolumn{2}{c}{\cellcolor[gray]{0.9}} \\
    cz\_v2/4x5\_10 & 20 & 145 & 229 & 0.73 & 0.93 & 0.74 & 0.91  & \multicolumn{2}{c}{\cellcolor[gray]{0.9}} \\
    cz\_v2/5x5\_10 & 25 & 184 & 289 & 0.78 & 1.01 & 0.46 & 1.34  & \multicolumn{2}{c}{\cellcolor[gray]{0.9}} \\
    cz\_v2/4x4\_5 & 16 & 61 & 100 & 0.83 & 0.005 & 0.58 & 0.01  & \multicolumn{2}{c}{\cellcolor[gray]{0.9}} \\
    cz\_v2/4x5\_5 & 20 & 79 & 132 & 0.92 & 0.01 & 1.03 & 0.01  & \multicolumn{2}{c}{\cellcolor[gray]{0.9}} \\
    cz\_v2/5x5\_5 & 25 & 99 & 164 & 1.37 & 0.01 & 1.34 & 0.01  & \multicolumn{2}{c}{\cellcolor[gray]{0.9}} \\
    cz\_v2/5x6\_5 & 30 & 120 & 199 & 1.42 & 0.04 & 1.45 & 0.06  & \multicolumn{2}{c}{\cellcolor[gray]{0.9}} \\
    cz\_v2/6x6\_5 & 36 & 144 & 238 & 1.50 & 0.24 & 1.52 & 0.22 & \multicolumn{2}{c}{\cellcolor[gray]{0.9}\multirow{-8}*{No CNOT}} \\
    \bottomrule
  \end{tabular}
\end{table}

Following convention in the literature, we considered three types of equivalence
checking tasks:
\begin{enumerate}
  \item \emph{Equivalent}: Confirming equivalence between the original and
        transformed circuits.
        We used the \texttt{transpile} function in Qiskit with an optimization
        level of \emph{O3} to generate transformed circuits.
        For Google supremacy circuits, the transformed basic gate set was
        specified as \( \{\Had, \Z, \Rz, \CZ\} \), which sometimes led to an
        increased gate count compared to the original.
        % Some circuits, such as those from RevLib, were already optimized,
        % resulting in similar gate counts before and after transpilation.
        % However, the problem remains non-trivial, as circuit pairs can be
        % structurally different despite having similar gate counts.
  \item \emph{Missing}: Randomly removing one gate from the transformed circuit
        obtained in (1) and checking its difference with the original circuit.
  \item \emph{Reverse}: For transformed circuits containing CNOT gates after
        (1), we randomly selected one CNOT gate and swapped its control and
        target qubits, then checked for differences between the modified and
        original circuits.
        If no CNOT gates were present, this task was not performed.
\end{enumerate}

The results are summarized in \cref{tab:equivalence_check_results}, which
reports execution times for both programs.
They demonstrate that FeynmanDD generally outperforms in equivalence checking
tasks when applied to the quantum circuits from \cref{sec:exp}.
For instance, in tests with GHZ circuits, FeynmanDD runs significantly faster
than MQT-QCEC\@.
When evaluating circuits from RevLib, FeynmanDD also shows superior performance
in some cases.
However, for certain RevLib circuits, particularly those with numerous gates,
FeynmanDD's performance is less satisfactory.
% This suggests FeynmanDD has limitations in handling equivalence checking for
% specific circuit types.
% Based on these findings, future work will focus on expanding gate set support,
% developing more sophisticated variable ordering strategies, and integrating with
% other circuit analysis methods to enhance quantum circuit simulation and
% equivalence checking capabilities.

% \subsection{Summary of Numerical Experiments}

% The experimental results demonstrate FeynmanDD's efficiency and scalability.
% By leveraging MTBDD counting properties, it achieves quantum circuit simulation
% with significantly reduced memory footprint and execution time compared to
% leading simulators, positioning it as a promising new
% tool for further investigation.
% However, like all classical methods, FeynmanDD has its limitations in quantum
% circuit simulation.
% As observed in our sampling and equivalence checking tasks, the method can
% encounter significant challenges with circuits containing random long-range
% multi-qubit gates, where finding effective variable ordering becomes
% particularly difficult.

\section{Summary}\label{sec:summary}

In summary, FeynmanDD represents a new method for quantum circuit analysis,
building upon the Feynman path integral concept and decision diagram-based
counting algorithms.
The approach transforms circuits with a supported gate set into specialized
tensor networks---proposed as sum-of-powers forms---effectively reducing many
quantum circuit simulation tasks to counting problems.
The method converts quantum circuits to multi-terminal BDDs and computes
quantities of interest by counting function values modulo a constant.
Experimental results demonstrate superior performance in single amplitude
computation compared to existing decision diagram methods.
FeynmanDD also delivers competitive performance in multi-qubit measurement
string sampling and equivalence checking, successfully completing certain
computations that remain intractable for other approaches.

This research reveals several promising future directions.
First, it would be interesting to explore approximation techniques to complement
the current exact computation approach, potentially enabling scaling to larger
systems.
Second, alternative methods for handling complex gates like CNOT could be
investigated, such as using linear function labels for variables instead of
introducing additional variables.
Third, further improvements might be achieved through more comprehensive circuit
simplification techniques, which were minimally utilized in our implementation.
Fourth, the implementation could expand beyond CUDD, potentially adopting Sylvan
as an alternative support package to leverage parallel computing capabilities.
Given that memory usage is not currently the bottleneck, such a transition may
yield significant performance improvements for large circuits.
Finally, this technique shows promise for broader quantum computing
applications, including quantum device validation, performance benchmarking,
quantum circuit simplification, and verifiable quantum advantage.

\section*{Acknowledgments}
The work is supported by National Key Research and Development Program of China
(Grant No.\ 2023YFA1009403), National Natural Science Foundation of China (Grant
No.\ 12347104), Beijing Natural Science Foundation (Grant No.\ Z220002),
Zhongguancun Laboratory, and Tsinghua University.
BC acknowledges the support by the National Research Foundation, Singapore, and
A*STAR under its CQT Bridging Grant and its Quantum Engineering Programme under
grant NRF2021-QEP2-02-P05.

% \newpage
\bibliographystyle{splncs04}

\newpage
\bibliography{sum-of-power}

\begin{thebibliography}{10}
\providecommand{\url}[1]{\texttt{#1}}
\providecommand{\urlprefix}{URL }
\providecommand{\doi}[1]{https://doi.org/#1}

\bibitem{AA13}
Aaronson, S., Arkhipov, A.: The {Computational} {Complexity} of {Linear}
  {Optics}. Theory of Computing  \textbf{9}(1),  143--252 (2013).
  \doi{10.4086/toc.2013.v009a004}

\bibitem{AG04}
Aaronson, S., Gottesman, D.: Improved {Simulation} of {Stabilizer} {Circuits}.
  Physical Review A  \textbf{70}(5),  052328 (2004).
  \doi{10.1103/PhysRevA.70.052328}

\bibitem{AP06}
Abdollahi, A., Pedram, M.: Analysis and {Synthesis} of {Quantum} {Circuits} by
  {Using} {Quantum} {Decision} {Diagrams}. In: Proceedings of the {Design}
  {Automation} \& {Test} in {Europe} {Conference}. pp.~1--6. IEEE, Munich,
  Germany (2006). \doi{10.1109/date.2006.244176}

\bibitem{ACC+25}
Abdulla, P.A., Chen, Y.G., Chen, Y.F., Hol{\'\i}k, L., Leng{\'a}l, O., Lin,
  J.A., Lo, F.Y., Tsai, W.L.: Verifying {Quantum} {Circuits} with
  {Level}-{Synchronized} {Tree} {Automata}. Verifying Quantum Circuits with
  Level-Synchronized Tree Automata  \textbf{9}(POPL),  32:923--32:953 (2025).
  \doi{10.1145/3704868}

\bibitem{Amy19b}
Amy, M.: Formal {Methods} in {Quantum} {Circuit} {Design}. Ph.D. thesis,
  University of Waterloo (2019)

\bibitem{Amy19a}
Amy, M.: Towards {Large}-scale {Functional} {Verification} of {Universal}
  {Quantum} {Circuits}. Electronic Proceedings in Theoretical Computer Science
  \textbf{287},  1--21 (2019). \doi{10.4204/EPTCS.287.1}

\bibitem{Amy23}
Amy, M.: Complete equational theories for the sum-over-paths with unbalanced
  amplitudes. Electronic Proceedings in Theoretical Computer Science
  \textbf{384},  127--141 (2023). \doi{10.4204/eptcs.384.8}

\bibitem{AAB+19}
Arute, F., Arya, K., Babbush, R., Bacon, D., Bardin, J.C., Barends, R., Biswas,
  R., Boixo, S., Brandao, F.G.S.L., Buell, D.A., Burkett, B., Chen, Y., Chen,
  Z., Chiaro, B., Collins, R., Courtney, W., Dunsworth, A., Farhi, E., Foxen,
  B., Fowler, A., Gidney, C., Giustina, M., Graff, R., Guerin, K., Habegger,
  S., Harrigan, M.P., Hartmann, M.J., Ho, A., Hoffmann, M., Huang, T., Humble,
  T.S., Isakov, S.V., Jeffrey, E., Jiang, Z., Kafri, D., Kechedzhi, K., Kelly,
  J., Klimov, P.V., Knysh, S., Korotkov, A., Kostritsa, F., Landhuis, D.,
  Lindmark, M., Lucero, E., Lyakh, D., Mandr{\`a}, S., McClean, J.R., McEwen,
  M., Megrant, A., Mi, X., Michielsen, K., Mohseni, M., Mutus, J., Naaman, O.,
  Neeley, M., Neill, C., Niu, M.Y., Ostby, E., Petukhov, A., Platt, J.C.,
  Quintana, C., Rieffel, E.G., Roushan, P., Rubin, N.C., Sank, D., Satzinger,
  K.J., Smelyanskiy, V., Sung, K.J., Trevithick, M.D., Vainsencher, A.,
  Villalonga, B., White, T., Yao, Z.J., Yeh, P., Zalcman, A., Neven, H.,
  Martinis, J.M.: Quantum supremacy using a programmable superconducting
  processor. Nature  \textbf{574}(7779),  505--510 (2019).
  \doi{10.1038/s41586-019-1666-5}

\bibitem{BFG+93}
Bahar, R., Frohm, E., Gaona, C., Hachtel, G., Macii, E., Pardo, A., Somenzi,
  F.: Algebraic decision diagrams and their applications. In: Proceedings of
  1993 {International} {Conference} on {Computer} {Aided} {Design} ({ICCAD}).
  pp. 188--191. IEEE Comput. Soc. Press, Santa Clara, CA, USA (1993).
  \doi{10.1109/iccad.1993.580054}

\bibitem{Bry86}
Bryant, R.E.: Graph-{Based} {Algorithms} for {Boolean} {Function}
  {Manipulation}. IEEE Transactions on Computers  \textbf{C-35}(8),  677--691
  (1986). \doi{10.1109/tc.1986.1676819}

\bibitem{Bry95}
Bryant, R.: Binary decision diagrams and beyond: enabling technologies for
  formal verification. In: Proceedings of {IEEE} {International} {Conference}
  on {Computer} {Aided} {Design} ({ICCAD}). pp. 236--243. IEEE Comput. Soc.
  Press, San Jose, CA, USA (1995). \doi{10.1109/iccad.1995.480018}

\bibitem{BBW21}
Burgholzer, L., Bauer, H., Wille, R.: Hybrid {Schr{\"o}dinger}-{Feynman}
  {Simulation} of {Quantum} {Circuits} {With} {Decision} {Diagrams}. In: 2021
  {IEEE} {International} {Conference} on {Quantum} {Computing} and
  {Engineering} ({QCE}). pp. 199--206. IEEE, Broomfield, CO, USA (2021).
  \doi{10.1109/QCE52317.2021.00037}

\bibitem{CBB+21}
Chareton, C., Bardin, S., Bobot, F., Perrelle, V., Valiron, B.: An automated
  deductive verification framework for circuit-building quantum programs. In:
  Programming Languages and Systems: 30th European Symposium on Programming,
  ESOP 2021, Held as Part of the European Joint Conferences on Theory and
  Practice of Software, ETAPS 2021, Luxembourg City, Luxembourg, March 27 -
  April 1, 2021, Proceedings. pp. 148--177. Springer-Verlag, Berlin, Heidelberg
  (2021). \doi{10.1007/978-3-030-72019-3_6}

\bibitem{CCL+23}
Chen, Y.F., Chung, K.M., Leng{\'a}l, O., Lin, J.A., Tsai, W.L.: {AutoQ}: {An}
  {Automata}-{Based} {Quantum} {Circuit} {Verifier}. In: Enea, C., Lal, A.
  (eds.) Computer {Aided} {Verification}. pp. 139--153. Springer Nature
  Switzerland, Cham (2023). \doi{10.1007/978-3-031-37709-9_7}

\bibitem{CD08}
Coecke, B., Duncan, R.: Interacting {Quantum} {Observables}. In: Aceto, L.,
  Damg{\aa}rd, I., Goldberg, L.A., Halld{\'o}rsson, M.M.,
  Ing{\'o}lfsd{\'o}ttir, A., Walukiewicz, I. (eds.) Automata, {Languages} and
  {Programming}. pp. 298--310. Springer, Berlin, Heidelberg (2008).
  \doi{10.1007/978-3-540-70583-3_25}

\bibitem{CHKW22}
Coecke, B., Horsman, D., Kissinger, A., Wang, Q.: Kindergarden quantum
  mechanics graduates ...or how {I} learned to stop gluing {LEGO} together and
  love the {ZX}-calculus. Theoretical Computer Science  \textbf{897},  1--22
  (2022). \doi{10.1016/j.tcs.2021.07.024}

\bibitem{Daw19}
Dawson, C.: Solovay {Kitaev} algorithm (2019)

\bibitem{DHM+05}
Dawson, C.M., Hines, A.P., Mortimer, D., Haselgrove, H.L., Nielsen, M.A.,
  Osborne, T.J.: Quantum computing and polynomial equations over the finite
  field {Z2}. Quantum Info. Comput.  \textbf{5}(2),  102--112 (2005)

\bibitem{DTPW24}
Deng, H., Tao, R., Peng, Y., Wu, X.: A case for synthesis of recursive quantum
  unitary programs. Proc. ACM Program. Lang.  \textbf{8}(POPL) (2024).
  \doi{10.1145/3632901}

\bibitem{Dij16}
van Dijk, T.: Sylvan: multi-core decision diagrams. {PhD}, University of
  Twente, Enschede, The Netherlands (2016). \doi{10.3990/1.9789036541602}

\bibitem{FGHP99}
Fenner, S., Green, F., Homer, S., Pruim, R.: Determining acceptance possibility
  for a quantum computation is hard for the polynomial hierarchy. Proceedings
  of the Royal Society of London. Series A: Mathematical, Physical and
  Engineering Sciences  \textbf{455}(1991),  3953--3966 (1999).
  \doi{10.1098/rspa.1999.0485}

\bibitem{FPV05}
Ferrara, A., Pan, G., Vardi, M.Y.: Treewidth in {Verification}: {Local} vs.
  {Global}. In: Sutcliffe, G., Voronkov, A. (eds.) Logic for {Programming},
  {Artificial} {Intelligence}, and {Reasoning}. pp. 489--503. Lecture {Notes}
  in {Computer} {Science}, Springer, Berlin, Heidelberg (2005).
  \doi{10.1007/11591191_34}

\bibitem{FH65}
Feynman, R., Hibbs, A.: Quantum Mechanics and Path Integrals. International
  series in pure and applied physic, McGraw-Hill (1965)

\bibitem{GS13}
Giles, B., Selinger, P.: Exact synthesis of multiqubit {Clifford}+ {T}
  circuits. Physical Review A  \textbf{87}(3),  032332 (2013).
  \doi{10.1103/physreva.87.032332}

\bibitem{Got98}
Gottesman, D.: The {Heisenberg} {Representation} of {Quantum} {Computers}.
  arXiv:quant-ph/9807006  (1998)

\bibitem{HZK+22}
Hillmich, S., Zulehner, A., Kueng, R., Markov, I.L., Wille, R.: Approximating
  {Decision} {Diagrams} for {Quantum} {Circuit} {Simulation}. ACM Transactions
  on Quantum Computing  \textbf{3}(4),  22:1--22:21 (2022).
  \doi{10.1145/3530776}

\bibitem{HFLY22}
Hong, X., Feng, Y., Li, S., Ying, M.: Equivalence {Checking} of {Dynamic}
  {Quantum} {Circuits}. In: 2022 {IEEE}/{ACM} {International} {Conference} {On}
  {Computer} {Aided} {Design} ({ICCAD}). pp.~1--8 (2022)

\bibitem{HYF+21}
Hong, X., Ying, M., Feng, Y., Zhou, X., Li, S.: Approximate {Equivalence}
  {Checking} of {Noisy} {Quantum} {Circuits}. In: 2021 58th {ACM}/{IEEE}
  {Design} {Automation} {Conference} ({DAC}). pp. 637--642 (2021).
  \doi{10.1109/DAC18074.2021.9586214}

\bibitem{HZL+22}
Hong, X., Zhou, X., Li, S., Feng, Y., Ying, M.: A {Tensor} {Network} based
  {Decision} {Diagram} for {Representation} of {Quantum} {Circuits}. ACM Trans.
  Des. Autom. Electron. Syst.  \textbf{27}(6),  60:1--60:30 (2022).
  \doi{10.1145/3514355}

\bibitem{JFB+24}
Jiang, S., Fu, R., Burgholzer, L., Wille, R., Ho, T.Y., Huang, T.W.: {FlatDD}:
  {A} {High}-{Performance} {Quantum} {Circuit} {Simulator} using {Decision}
  {Diagram} and {Flat} {Array}. In: Proceedings of the 53rd {International}
  {Conference} on {Parallel} {Processing}. pp. 388--399. {ICPP} '24,
  Association for Computing Machinery, New York, NY, USA (2024).
  \doi{10.1145/3673038.3673073}

\bibitem{Knu09}
Knuth, D.E.: The {Art} of {Computer} {Programming}, {Volume} 4, {Fascicle} 1
  ({Bitwise} {Tricks} \& {Techniques}; {Binary} {Decision} {Diagrams}).
  AddisonWesley Professional, Upper Saddle River, NJ, 1 edition edn. (2009)

\bibitem{LOLS24}
Larsen, C.B., Olsen, S.B., Larsen, K.G., Schilling, C.: Contraction
  {Heuristics} for {Tensor} {Decision} {Diagrams}. Entropy  \textbf{26}(12),
  ~1058 (2024). \doi{10.3390/e26121058}

\bibitem{LWY+20}
Li, R., Wu, B., Ying, M., Sun, X., Yang, G.: Quantum supremacy circuit
  simulation on sunway taihulight. IEEE Transactions on Parallel and
  Distributed Systems  \textbf{31}(4),  805--816 (2020).
  \doi{10.1109/TPDS.2019.2947511}

\bibitem{Lin99}
Lind-Nielsen, J.: Buddy : A binary decision diagram package. (1999)

\bibitem{LWK11}
Lu, C.Y., Wang, S.A., Kuo, S.Y.: An {Extended} {XQDD} {Representation} for
  {Multiple}-{Valued} {Quantum} {Logic}. IEEE Transactions on Computers
  \textbf{60}(10),  1377--1389 (2011). \doi{10.1109/TC.2011.114}

\bibitem{MS08}
Markov, I.L., Shi, Y.: Simulating {Quantum} {Computation} by {Contracting}
  {Tensor} {Networks}. SIAM Journal on Computing  \textbf{38}(3),  963--981
  (2008). \doi{10.1137/050644756}

\bibitem{MBL24}
Mei, J., Bonsangue, M., Laarman, A.: Simulating {Quantum} {Circuits} by {Model}
  {Counting} (2024). \doi{10.48550/arXiv.2403.07197}

\bibitem{MMBS04}
Molitor, P., Mohnke, J., Becker, B., Scholl, C.: Equivalence {Checking} of
  {Digital} {Circuits}: {Fundamentals}, {Principles}, {Methods}. Springer New
  York, NY (2004)

\bibitem{Mon17}
Montanaro, A.: Quantum circuits and low-degree polynomials over {F2}. Journal
  of Physics A: Mathematical and Theoretical  \textbf{50}(8),  084002 (2017).
  \doi{10.1088/1751-8121/aa565f}

\bibitem{NRS+18}
Nam, Y., Ross, N.J., Su, Y., Childs, A.M., Maslov, D.: Automated optimization
  of large quantum circuits with continuous parameters. npj Quantum Information
   \textbf{4}(1),  1--12 (2018). \doi{10.1038/s41534-018-0072-4}

\bibitem{NWM+16}
Niemann, P., Wille, R., Miller, D.M., Thornton, M.A., Drechsler, R.: {QMDDs}:
  {Efficient} {Quantum} {Function} {Representation} and {Manipulation}. IEEE
  Transactions on Computer-Aided Design of Integrated Circuits and Systems
  \textbf{35}(1),  86--99 (2016). \doi{10.1109/tcad.2015.2459034}

\bibitem{PZ22}
Pan, F., Zhang, P.: Simulation of {Quantum} {Circuits} {Using} the
  {Big}-{Batch} {Tensor} {Network} {Method}. Physical Review Letters
  \textbf{128}(3),  030501 (2022). \doi{10.1103/PhysRevLett.128.030501}

\bibitem{PBW22}
Peham, T., Burgholzer, L., Wille, R.: Equivalence {Checking} of {Quantum}
  {Circuits} {With} the {ZX}-{Calculus}. IEEE Journal on Emerging and Selected
  Topics in Circuits and Systems  \textbf{12}(3),  662--675 (2022).
  \doi{10.1109/JETCAS.2022.3202204}

\bibitem{RS16}
Ross, N.J., Selinger, P.: Optimal ancilla-free {Clifford}+{T} approximation of
  z-rotations. arXiv:1403.2975 [quant-ph]  (2016)

\bibitem{Rud93}
Rudell, R.: Dynamic variable ordering for ordered binary decision diagrams. In:
  Proceedings of 1993 {International} {Conference} on {Computer} {Aided}
  {Design} ({ICCAD}). pp. 42--47 (1993). \doi{10.1109/ICCAD.1993.580029}

\bibitem{SCR23}
Sistla, M., Chaudhuri, S., Reps, T.: Symbolic quantum simulation with
  quasimodo. In: Computer Aided Verification: 35th International Conference,
  CAV 2023, Paris, France, July 17--22, 2023, Proceedings, Part III. pp.
  213--225. Springer-Verlag, Berlin, Heidelberg (2023).
  \doi{10.1007/978-3-031-37709-9_11}

\bibitem{SCR24}
Sistla, M., Chaudhuri, S., Reps, T.: Weighted {Context}-{Free}-{Language}
  {Ordered} {Binary} {Decision} {Diagrams}. Weighted CFLOBDDs
  \textbf{8}(OOPSLA2),  320:1390--320:1419 (2024). \doi{10.1145/3689760}

\bibitem{SPJT21}
S{\o}lvsten, S.C., van~de Pol, J., Jakobsen, A.B., Thomasen, M.W.B.: Efficient
  {Binary} {Decision} {Diagram} {Manipulation} in {External} {Memory}.
  arXiv:2104.12101 [cs]  (2021)

\bibitem{Som05}
Somenzi, F.: {CUDD}: {CU} decision diagram package (release 3.0.0). University
  of Colorado at Boulder (2005)

\bibitem{TJJ21}
Tsai, Y.H., Jiang, J.H.R., Jhang, C.S.: Bit-{Slicing} the {Hilbert} {Space}:
  {Scaling} {Up} {Accurate} {Quantum} {Circuit} {Simulation}. In: 2021 58th
  {ACM}/{IEEE} {Design} {Automation} {Conference} ({DAC}). pp. 439--444 (2021).
  \doi{10.1109/DAC18074.2021.9586191}

\bibitem{VMH03}
Viamontes, G.F., Markov, I.L., Hayes, J.P.: Improving {Gate}-{Level}
  {Simulation} of {Quantum} {Circuits}. Quantum Information Processing
  \textbf{2}(5),  347--380 (2003). \doi{10.1023/b:qinp.0000022725.70000.4a}

\bibitem{VMH07}
Viamontes, G.F., Markov, I.L., Hayes, J.P.: Checking equivalence of quantum
  circuits and states. In: 2007 {IEEE}/{ACM} {International} {Conference} on
  {Computer}-{Aided} {Design}. pp. 69--74. IEEE, San Jose, CA, USA (2007).
  \doi{10.1109/iccad.2007.4397246}

\bibitem{Vil20}
Vilmart, R.: The structure of sum-over-paths, its consequences, and
  completeness for clifford (2020)

\bibitem{Vil22}
Vilmart, R.: Completeness of sum-over-paths for toffoli-hadamard and the dyadic
  fragments of quantum computation (2022)

\bibitem{Vil24}
Vilmart, R.: Rewriting and completeness of sum-over-paths in dyadic fragments
  of quantum computing. Logical Methods in Computer Science  \textbf{Volume 20,
  Issue 1} (2024). \doi{10.46298/lmcs-20(1:20)2024}

\bibitem{VGH+23}
Vinkhuijzen, L., Grurl, T., Hillmich, S., Brand, S., Wille, R., Laarman, A.:
  Efficient {Implementation} of {LIMDDs} for {Quantum} {Circuit} {Simulation}.
  In: Caltais, G., Schilling, C. (eds.) Model {Checking} {Software}. pp. 3--21.
  Springer Nature Switzerland, Cham (2023). \doi{10.1007/978-3-031-32157-3_1}

\bibitem{Wat09}
Watrous, J.: Quantum {Computational} {Complexity}. In: Meyers, R.A. (ed.)
  Encyclopedia of {Complexity} and {Systems} {Science}, pp. 7174--7201.
  Springer, New York, NY (2009). \doi{10.1007/978-0-387-30440-3_428}

\bibitem{Weg00}
Wegener, I.: Branching programs and binary decision diagrams: theory and
  applications. {SIAM} monographs on discrete mathematics and applications,
  Society for Industrial and Applied Mathematics, Philadelphia (2000)

\bibitem{Wet20}
van~de Wetering, J.: {ZX}-calculus for the working quantum computer scientist.
  arXiv:2012.13966 [quant-ph]  (2020)

\bibitem{WGT+08}
Wille, R., Gro{\ss}e, D., Teuber, L., Dueck, G.W., Drechsler, R.: {RevLib}: An
  online resource for reversible functions and reversible circuits. In: {Int'l
  Symp. on Multi-Valued Logic}. pp. 220--225 (2008)

\bibitem{WHB22}
Wille, R., Hillmich, S., Burgholzer, L.: Tools for {Quantum} {Computing}
  {Based} on {Decision} {Diagrams}. ACM Transactions on Quantum Computing
  \textbf{3}(3),  13:1--13:17 (2022). \doi{10.1145/3491246}

\bibitem{ZHW19}
Zulehner, A., Hillmich, S., Wille, R.: How to {Efficiently} {Handle} {Complex}
  {Values}? {Implementing} {Decision} {Diagrams} for {Quantum} {Computing}.
  arXiv:1911.12691 [quant-ph]  (2019)

\bibitem{ZW19}
Zulehner, A., Wille, R.: Advanced {Simulation} of {Quantum} {Computations}.
  IEEE Transactions on Computer-Aided Design of Integrated Circuits and Systems
   \textbf{38}(5),  848--859 (2019). \doi{10.1109/tcad.2018.2834427}

\end{thebibliography}

\end{document}